\documentclass[11pt]{article}
    
    \usepackage[margin=1.0in]{geometry}
    \usepackage{mathtools}
    \usepackage{amsthm}
    \usepackage{hyperref}
    \usepackage{subcaption}
    \usepackage{amsfonts}
    \usepackage{graphicx}
    \usepackage{cmbright}
    \usepackage{amssymb}
    \usepackage{fancyhdr}
    \usepackage{cleveref}
    
    \allowdisplaybreaks

    \newtheorem{theorem}{Theorem}[section]
    \newtheorem{corollary}{Corollary}[theorem]
    \newtheorem{lemma}[theorem]{Lemma}
    \newtheorem{defn}[theorem]{Definition}
    \newtheorem{prop}[theorem]{Proposition}
    \newtheorem{remark}[theorem]{Remark}

    \newcommand{\R}{\mathbb{R}}
    
    \newcommand{\diff}{\textnormal{d}}

    \newcommand{\bfa}{\boldsymbol{a}}
    \newcommand{\bfb}{\boldsymbol{b}}
    \newcommand{\bff}{\boldsymbol{f}}
    \newcommand{\bfk}{\boldsymbol{k}}
    \newcommand{\bfu}{\boldsymbol{u}}
    \newcommand{\bfv}{\boldsymbol{v}}
    \newcommand{\bfx}{\boldsymbol{x}}
    \newcommand{\bfy}{\boldsymbol{y}}
    \newcommand{\bfz}{\boldsymbol{z}}

    \newcommand{\bfA}{\boldsymbol{A}}
    \newcommand{\bfE}{\boldsymbol{E}}
    \newcommand{\bfT}{\boldsymbol{T}}
    \newcommand{\bfU}{\boldsymbol{U}}
    \newcommand{\bfW}{\boldsymbol{W}}
    
    \newcommand{\bfxi}{\boldsymbol{\xi}}

    \newcommand{\Tmax}{T_{\textnormal{max}}}
    \newcommand{\Tmin}{T_{\textnormal{min}}}
    \newcommand{\Etot}{E_{\textnormal{tot}}}

\title{Asymptotic Relaxation of Moment Equations for a Multi-species, Homogeneous BGK Model
    \thanks{
    This manuscript has been authored, in part, by UT-Battelle, LLC, under contract DE-AC05-00OR22725 with the US Department of Energy (DOE).
    The U.S. government retains the publisher, by accepting the article for publication, acknowledges that the U.S. government retains a nonexclusive, paid-up, irrevocable, worldwide license to publish or
    reproduce the published form of this manuscript, or allow others to do so, for U.S. government purposes.
    DOE will provide public access to these results of federally sponsored research in accordance with the DOE Public Access Plan
    (httpsȷ//energy.gov/downloads/doe-public-access-plan).
    \\
    This work was supported, in part, by the U.S. Department of Energy through the Los Alamos National Laboratory. Los Alamos National Laboratory is operated by Triad National Security, LLC, for the National Nuclear Security Administration of U.S. Department of Energy (Contract No. 89233218CNA000001).
    }
}

\author{
Evan Habbershaw\thanks{
    Department of Mathematics, The University of Tennessee,  Knoxville, TN, 37996
  },
Ryan S. Glasby\thanks{
    Computational Sciences and Engineering Division, Oak Ridge National Laboratory, Oak Ridge, TN, 37831
  },
Jeffrey R. Haack\thanks{
    Computational Physics and Methods Group, Los Alamos National Laboratory, Los Alamos, NM, 87545
  },
Cory D. Hauck\thanks{
    Department of Mathematics, The University of Tennessee,  Knoxville, TN, 37996
    and
    Computer Science and Mathematics Division,
Oak Ridge National Laboratory, Oak Ridge, TN, 37831
  }, and
Steven M. Wise\thanks{
  Department of Mathematics, The University of Tennessee,  Knoxville, TN, 37996
  }
}

    \pagestyle{fancy}
    \setlength{\headheight}{15pt}
    \rhead{Habbershaw, Glasby, Haack, Hauck, and Wise}
    \lhead{Moment Equations for Multi-Species BGK}
    
    \date{13 September 2023}

    \begin{document}
    
    \maketitle
    
    \begin{abstract}
Multi-species BGK models describe the dynamics of rarefied gases with constituent particles of different elements or compounds with potentially non-trivial velocity distributions.  In this paper, moment equations for the bulk velocities, energies, and temperatures of a spatially homogeneous multi-species BGK model are examined.  A key challenge in analyzing these equations is the fact that the collision frequencies are allowed to depend on the species temperatures, which allows for more realistic simulations of dilute gas flow.  Therefore, a positive lower bound is established for the species temperatures.  With this lower bound, a global existence and uniqueness of solutions to the coupled velocity-energy ODE system is established.  The lower bound also enables a proof of exponential decay to a unique steady-state solution.  Numerical results are presented to demonstrate how the bulk velocities and temperatures relax for large times.
    \end{abstract}

    \textbf{Key words:} Multispecies BGK, Moment Equations, Kinetic Theory, Rarefied Gas Dynamics 

    \textbf{MSC Codes:} 37A60, 34A12, 82C40

    \section{Introduction}

The Bhatnagar-Gross-Krook (BGK) equation \cite{bhatnagar} is a well-known and frequently employed model for simulating dilute gases.
It has been used in a variety of applications, including
hypersonics
\cite{baranger2019bgk,evans2017aerodynamic},
micro-channel flows
\cite{ganapol20191d,liu2018rarefaction},
nozzle flows
\cite{kumar2011simulation, kumar2010assessment,pfeiffer2019coupled},
and plasma transport
\cite{beckwith2018modelling, haack2017interfacial}.

The BGK equation prescribes the evolution of a position-velocity phase-space distribution that characterizes the state of a dilute gas and is typically employed as a simpler alternative to the Boltzmann equation.
The BGK equation can be derived via an ad-hoc approximation of the Boltzmann collision operator by a much simpler nonlinear relaxation operator \cite[Section 4.2]{liboff2003kinetic}, where the rate of relaxation is determined by a scalar collision frequency.
Like the Boltzmann collision operator, the BGK relaxation operator conserves mass, momentum, and energy.
It also dissipates entropy, and satisfies an H-Theorem that characterizes local thermodynamic equilibrium;  see e.g. \cite[Section 3.6.1]{struchtrup2005macroscopic}.

Recently, multi-species versions of the BGK equation have been derived that satisfy the conservation and entropy dissipation properties of the multi-species Boltzmann equation \cite{bobylev2018general,klingenberg2017consistent,Haack2017,haack2021consistent}.
In most of these cases, the collision frequencies are assumed to be independent of the phase-space velocity (\cite{haack2021consistent} being an exception) and dependent on the kinetic distributions only through their respective number densities (\cite{Haack2017} being an exception).
Conservation of the number density implies that collision frequencies that depend only on the number densities will be constant in time.
However, for general collision models, these frequencies will often depend on the species temperature or on the phase-space velocity \cite{book:Bird_DSMC}, and relaxing these commonly used assumptions can significantly alter the behavior of solutions \cite{haack2021consistent}.
In addition, the analysis and simulation of the BGK equations becomes more difficult when the collision frequencies depend on time and/or phase-space velocity.

In the current paper, we study the space-homogeneous, multi-species BGK equations from \cite{Haack2017}, which can be understood as a special, illustrative case of the equations in \cite{bobylev2018general} or \cite{klingenberg2017consistent}.  
We do so in the context of collision frequencies that are independent of the phase-space velocity, but depend on both number density and temperature.  
In particular, we analyze the ODE system for the momentum and energy moments (the number density evolution is trivial), which is important for asymptotic analysis and for the development of numerical methods. 
Indeed, many numerical simulations of the full BGK system (including phase-space advection) rely on implicit-explicit (IMEX) methods that treat the space-homogeneous component implicitly.  
In this setting, the result of solving the associated moment equations first is to effectively linearize and diagonalize (with respect to the phase-space velocity) the required implicit solve, yielding a significant reduction in computational cost.
(See \cite{coron,pieraccini2007implicit} for applications to the single species case and \cite[Section 4]{puppo2019kinetic} for an extension to the multi-species setting when the collision frequencies depend only on the number densities.)

In this paper, we first establish existence and uniqueness of solutions to the momentum-energy ODE system, which follows from standard theory once a lower positive bound can be established on the species temperatures.
We show monoticity of the minimum temperature envelope as well as  upper and lower bounds on the bulk velocity.
We then prove exponential decay of the species momenta and energies to their steady state values, thereby generalizing some of the results in \cite{crestetto2020kinetic} to a system with an arbitrary number of species.

The remainder of the paper is organized as follows.
In \Cref{section:modelEquations}, we introduce the basic equations.
In \Cref{section:existenceUniqueness}, we prove existence and uniqueness results and show that the species temperatures remain positive if initially so.  We also establish bounds on the bulk velocities.
In \Cref{section:asymptoticBehavior}, we establish the main results about the long-time behavior of the moment equations.
In \Cref{section:numericalDemonstration}, we provide some examples to demonstrate this behavior numerically.
Conclusions are provided in \Cref{sec:conclusions}.
The Appendix contains a statement of the Picard-Lindel\"of Theorem, which is used for the existence/uniqueness result, as well as several technical results that are used in the proofs in 
\Cref{section:asymptoticBehavior}.

    \section{Model Equations}
    \label{section:modelEquations}

We denote by $f_i$ the kinetic distribution of particles of species $i\in\{1,\cdots,N\}$ having mass $m_i$.
More specifically, $f_i(\bfx,\bfv,t)$ is the density of species $i$ particles at the point $\bfx\in\Omega\subset\R^d$, with microscopic velocity $\bfv\in\R^d$, at time $t\geq0,$ with respect to the measure $\diff \bfv \, \diff\bfx$. 
Associated to each $f_i$ are the species number density $n_i$, mass density $\rho_i$, bulk velocity $\bfu_i$, and temperature $T_i$, defined by
    \begin{equation}
\rho_i = m_in_i = m_i\int_{\R^d}f_i\diff\bfv
    ,\qquad
\bfu_i = \frac{1}{\rho_i}\int_{\R^d}m_i\bfv f_i\diff\bfv
    ,\qquad
T_i = \frac{m_i}{n_i d}\int_{\R^d}\left|\bfv-\bfu_i\right|^2f_i\diff\bfv. \label{EQ:speciesMoments}
    \end{equation}
    The species momentum densities $\bfk_i$ and species energy densities $E_i$ are given by
    \begin{equation}
\bfk_i = \rho_i \bfu_i
    \qquad \text{and} \qquad
E_i = \frac{1}{2}\rho_i|\bfu_i|^2 + \frac{d}{2}n_i T_i.
    \label{eqn:energy-def}
    \end{equation}
BGK models are expressed in terms of Maxwellian distributions that depend on these moments.

    \begin{defn}
Given $n>0,\bfu\in\R^d,\theta>0,$ a Maxwellian $M_{n,\bfu,\theta}$ is a function of the form
    \begin{align}
    \label{eq:Maxwellian_def}
M_{n,\bfu,\theta}(\bfv) = \frac{n}{(2\pi\theta)^\frac{d}{2}}\exp\left(-\frac{\left|\bfv-\bfu\right|^2}{2\theta}\right).
    \end{align}
    \end{defn}

Straightforward computations show that the moments of a Maxwellian satisfy
    \begin{align}
\int_{\R^d}M_{n,\bfu,\theta}\,\diff\bfv = n,
    \qquad 
\int_{\R^d}\bfv M_{n,\bfu,\theta} \, \diff\bfv = n\bfu,
    \qquad   
\int_{\R^d}|\bfv|^2M_{n,\bfu,\theta} \, \diff\bfv = n|\bfu|^2+dn\theta.
    \end{align}

We consider in this paper the multi-species BGK equation from \cite{Haack2017}, given by
    \begin{align}
    \label{EQ:BGKequation}
\frac{\partial f_i}{\partial t} + \bfv\cdot\nabla_{\bfx}f_i= \frac{1}{\varepsilon}\sum_j\lambda_{i,j}(M_{i,j}-f_i),
\quad \forall i\in\{1,\cdots,N\},
    \end{align}
where $\varepsilon>0$ is the Knudsen number, $\lambda_{i,j}>0$ is the (nondimensional) frequency of collisions between species $i$ and $j$ (independent of $\bfv$), and
$ M_{i,j}(\bfv) =M_{n_i,\bfu_{i,j},T_{i,j}/m_i}(\bfv) $
is a Maxwellian defined by \eqref{eq:Maxwellian_def}, using the mixture velocities and temperatures
    \begin{subequations}\label{EQ:MomentDefs}
    \begin{align}
\bfu_{i,j} &= \frac{\rho_i\lambda_{i,j}\bfu_i + \rho_j\lambda_{j,i}\bfu_j}{\rho_i\lambda_{i,j} + \rho_j\lambda_{j,i}},
    \\
T_{i,j} &= \frac{n_i\lambda_{i,j}T_i + n_j\lambda_{j,i}T_j}{n_i\lambda_{i,j} + n_j\lambda_{j,i}} + \frac{1}{d}\frac{\rho_i\rho_j\lambda_{i,j}\lambda_{j,i}}{\rho_i\lambda_{i,j}+\rho_j\lambda_{j,i}}\frac{|\bfu_i-\bfu_j|^2}{n_i\lambda_{i,j}+n_j\lambda_{j,i}}.
    \end{align}
    \end{subequations}
The existence and uniqueness of nonnegative mild solutions to (a more general) multi-species BGK equation was proved in \cite{klingenberg_2018} for periodic spatial domains and collision frequencies that depend on the number densities.

\subsection{Model Properties}

The BGK collision operators in \eqref{EQ:BGKequation} satisfy the following invariance properties:
For any $i,j,$
    \begin{align}
\int_{\R^d}\lambda_{i,j}(M_{i,j}-f_i)\diff\bfv
    &= 0,
    \\
\int_{\R^d}\lambda_{i,j}(M_{i,j}-f_i)m_i\bfv\diff\bfv + \int_{\R^d}\lambda_{j,i}(M_{j,i}-f_j)m_j\bfv\diff\bfv
    &= \mathbf{0},
    \\ 
\int_{\R^d}\lambda_{i,j}(M_{i,j}-f_i)m_i|\bfv|^2\diff\bfv + \int_{\R^d}\lambda_{j,i}(M_{j,i}-f_j)m_j|\bfv|^2\diff\bfv
    &= 0, 
    \end{align}
corresponding to conservation of species mass, total momentum, and total energy, respectively.
Furthermore, an entropy dissipation condition is satisfied:
For a spatially homogeneous mixture, the entropy function $\mathcal{H}= \sum_i\int_{\R^d}f_i\log f_i\diff\bfv$ satisfies $\frac{\diff}{\diff t}\mathcal{H} \leq 0$.
Moreover, $\frac{\diff}{\diff t}\mathcal{H} = 0$ if and only if
  $ f_i= M_{n_i,\bfu_{\textnormal{eq}},T_{\textnormal{eq}}/m_i}(\bfv) $
for all
  $i\in\{1,\cdots,N\}, $
where $\bfu_{\text{eq}}$ and $T_{\text{eq}}$ are the equilibrium bulk velocity and temperature, respectively, which are common to all species \cite{Haack2017}.

The collision frequencies $\lambda_{i,j}$ are typically expressed as functions of the species moments given in \eqref{EQ:speciesMoments}. 
To define these collision frequencies, \cite{Haack2017} presents a recipe based on matching either momentum or energy relaxation rates of the Boltzmann equation, given either a differential cross section or a momentum transfer cross section. 
In this paper, we consider the case of hard sphere collisions, for which the momentum transfer cross section is independent of microscopic velocity and given by~\cite{book:Bird_DSMC}:
    \begin{equation}
\sigma_{\textnormal{MT}} = \pi \left(\frac{d_i + d_j}{2}\right)^2 = \pi d_{i,j}^2,
    \end{equation}
where $d_i$ and $d_j$ are reference diameters for the particles of species $i$ and $j$; reference diameters for several species are given in \cite{book:Bird_DSMC} and \cite{book:MathTheory_NonUniformGases_Chapman_Cowling}. 
Following the recipe for matching energy relaxation rates from Section 4.3 of \cite{Haack2017}, we obtain the following collision frequencies for hard sphere interactions:
    \begin{equation}
\lambda_{i,j}^{\textnormal{HS}} = \frac{32 \pi^2}{3(2\pi)^{3/2}} \frac{m_i m_j}{(m_i + m_j)^2}(d_i + d_j)^2n_j\sqrt{\frac{T_i}{m_i} + \frac{T_j}{m_j}}.
    \label{eqn:collision-freqs-hhm}
    \end{equation}
While the formula in \eqref{eqn:collision-freqs-hhm} is specific to the physically relevant case $d=3$, collision frequencies can be derived in arbitrary dimensions in a similar fashion.

The dependence of the collision frequencies on the temperature makes the BGK model considered here more complicated than many models considered in the literature, which assume that $\lambda_{i,j}$ is a constant or depends only on $n_j$.
This is largely due to the non-Lipschitz nature of the square root function near zero and the fact that the collision frequencies are time-dependent functions.
While we focus on the hard spheres model in this paper for simplicity, the analysis provided here is readily applied to other models in the literature \cite{book:Bird_DSMC}.  

    \subsection{Moment Equations}
    \label{section:momentEquations}
The focus of the present work is the space homogeneous multi-species BGK equation, obtained by removing the advection terms from \eqref{EQ:BGKequation}.
In this case, $f_i = f_i(\bfv,t)$ satisfies
    \begin{align}\label{EQ:Hom_MS_BGK_Eq}
\frac{\partial f_i}{\partial t} = \frac{1}{\varepsilon}\sum_j\lambda_{i,j}(M_{i,j}-f_i), 
    \quad
\forall i\in\{1,\cdots,N\}.
    \end{align}
Multiplication of \eqref{EQ:Hom_MS_BGK_Eq} by $m_i$, $m_i\bfv$, and $m_i|\bfv|^2$, followed by integration with respect to $\bfv$, gives equations for the mass density, momentum density, and energy density of each species
    \begin{subequations}
    \label{EQ:velocity-energyODESystem1} 
    \begin{align}
\frac{\diff \rho_i}{\diff t} &= 0,
\label{eq:ConservationSpeciesMass}
    \\
\frac{\diff(\rho_i\bfu_i)}{\diff t} &= \frac{1}{\varepsilon}\sum_jA_{i,j}(\bfu_j-\bfu_i),
    \label{EQ:velocityODESystem1}
    \\
\frac{\diff E_i}{\diff t} &= \frac{1}{\varepsilon}\sum_jB_{i,j}\left(\frac{E_j}{n_j}-\frac{E_i}{n_i}\right) + \frac{1}{2\varepsilon}\sum_jB_{i,j}S_{i,j}(m_i-m_j),
    \label{EQ:energyODESystem_1}
    \end{align}
    \end{subequations}
where the equations for $\rho_i \bfu_i$ and $E_i$ are derived using the definitions for  $\bfu_{i,j}$ and $T_{i,j}$ from \eqref{EQ:MomentDefs}, and the ${N\times N}$ matrices $A$, $B$, $C$, $D$, $F$, $G$, and $S$ are given by
    \begin{subequations}
    \label{EQ:matrixDefinitions}
    \begin{gather}
\left[A\right]_{i,j} 
 = \frac{\rho_i\rho_j\lambda_{i,j}\lambda_{j,i}}{\rho_i\lambda_{i,j}+\rho_j\lambda_{j,i}}, 
    \;\;
\left[B\right]_{i,j} 
 = \frac{n_in_j\lambda_{i,j}\lambda_{j,i}}{n_i\lambda_{i,j}+n_j\lambda_{j,i}} , 
    \;\;
\left[S\right]_{i,j} 
 = \left| \bfu_{i,j}\right|^2,
    \;\;
\left[C\right]_{i,j} 
 = B_{i,j}S_{i,j}  
 ,
    \label{eq:ABSC_defs}
    \\
\left[D\right]_{i,j}
 =\Big(\sum_kA_{i,k} \Big)\delta_{i,j},
    \quad
\left[F\right]_{i,j} 
 = \Big(\sum_kB_{i,k}\Big)\delta_{i,j},
    \quad 
\left[G\right]_{i,j}
 = \Big(\sum_kC_{i,k}\Big)\delta_{i,j}.
    \label{eq:DFG_defs}
    \end{gather}
    \end{subequations}
According to \eqref{eq:ConservationSpeciesMass}, each individual species mass density $\rho_i$ (or, equivalently, number density $n_i$) is conserved (i.e., constant in time).
However, only the \emph{total} momentum and energy are conserved.
Indeed, summing \eqref{EQ:velocityODESystem1} and \eqref{EQ:energyODESystem_1} over all species yields:
    \begin{align}
\sum_{i=1}^N\frac{\diff(\rho_i\bfu_i)}{\diff t} &= \mathbf{0}
    \qquad \text{and} \qquad
\sum_{i=1}^N\frac{\diff E_i}{\diff t} = 0, 
    \label{EQ:PDEConservationLaws}
    \end{align}
respectively. 

In terms of the vectorized quantities 
    \begin{equation}
\bfU = \left[\bfu_1,\dots,\bfu_N\right]^\top \in\R^{N\times d}
    \qquad \text{and} \qquad
\bfE = [E_1,\dots E_N]^\top \in\R^N,
    \end{equation}
the equations in 
\eqref{EQ:velocity-energyODESystem1} 
take the form
    \begin{subequations}
    \label{EQ:Full_ODE_System}
    \begin{align}
P\frac{\diff\bfU}{\diff t} &= -\frac{1}{\varepsilon}(D-A)\bfU,
    \label{EQ:velocityODESystem2} 
    \\
\frac{\diff\bfE}{\diff t} &= -\frac{1}{\varepsilon}(F-B)Q^{-1}\bfE + \frac{1}{2\varepsilon}(G-C)M\mathbf{1},
    \label{EQ:energyODESystem_2}
    \end{align}
    \end{subequations}
where 
$P=\text{diag}\{\rho_k\} \in\R^{N\times N}$, 
$Q=\text{diag}\{n_k\}\in\R^{N\times N}$,
and $M=\text{diag}\{m_k\} \in\R^{N\times N}$.
To simplify the analysis later, it will be convenient to introduce variables
    \begin{equation}
\bfW=P^\frac{1}{2}\bfU\in\R^{N\times d} 
    \quad \text{ and } \quad
\bfxi=Q^{-\frac{1}{2}}\bfE\in\R^N. 
    \end{equation}
In terms of $\bfW$ and $\bfxi$, \eqref{EQ:Full_ODE_System} takes the form
    \begin{subequations}
    \label{EQ:Scaled_ODE_System}
    \begin{align}
\frac{\diff\bfW}{\diff t} &= - \frac{1}{\varepsilon}Z\bfW ,
    \label{EQ:WvelocityODESystem1} 
    \\
\frac{\diff\bfxi}{\diff t} &= -\frac{1}{\varepsilon}\widehat{Z}\bfxi + \frac{1}{2\varepsilon}Q^{-\frac{1}{2}}(G-C)M\mathbf{1},
    \end{align}
    \end{subequations}
where 
    \begin{align}
    \label{eq:Z_Zhat}
Z = P^{-\frac{1}{2}}(D-A)P^{-\frac{1}{2}} \in \mathbb{R}^{N\times N}
    \quad \text{ and } \quad
\widehat{Z} = Q^{-\frac{1}{2}}(F-B)Q^{-\frac{1}{2}}\in \mathbb{R}^{N\times N}.
    \end{align}

    \section{Properties of the ODE system}
    \label{section:existenceUniqueness}

The main result of this section is to show that there is a unique solution to the ODE system \eqref{EQ:Full_ODE_System} for all physically meaningful initial conditions.
When the collision frequencies are constant in time, such a result follows directly from the standard ODE theory for systems with Lipschtiz dynamics.
However, because the collision frequencies in \eqref{eqn:collision-freqs-hhm} depend on $\sqrt{T_i}$, only a local Lipschitz condition can be obtained for the right-hand side of \eqref{EQ:Full_ODE_System}.
Thus the key step is to show that the species temperatures are bounded below away from zero.
This bound will also be important for establishing exponential decay to steady-state in \Cref{section:asymptoticBehavior}.
We also establish upper and lower bounds on the bulk velocity components that will be used in the decay estimates.

    \subsection{Positivity of the Temperature}

To make the following proofs easier to manage, we introduce the parameters
    \begin{align}
\alpha_{i,j} &= \frac{\rho_i\lambda_{i,j}}{\rho_i\lambda_{i,j}+\rho_j\lambda_{j,i}}
    \qquad \text{and} \qquad
\beta_{i,j} = \frac{n_i\lambda_{i,j}}{n_i\lambda_{i,j}+n_j\lambda_{j,i}},
    \end{align}
which satisfy the conditions $\alpha_{i,j}+\alpha_{j,i}=1=\beta_{i,j}+\beta_{j,i}$.
With these parameters, the mixture values in \eqref{EQ:MomentDefs} can be expressed as
    \begin{align}
    \label{EQ:mixtureAlphaBeta}
\bfu_{i,j} = \alpha_{i,j}\bfu_i+\alpha_{j,i}\bfu_j
    \quad \text{and} \quad
T_{i,j} = \beta_{i,j}T_i+\beta_{j,i}T_j +  \frac{1}{d} m_i \alpha_{j,i} \beta_{i,j} \left|\bfu_i-\bfu_j\right|^2.
    \end{align}
    
    \begin{lemma}
    \label{lemma:simpleTemp}
The temperatures $T_i$, $i \in \{1,\dots,N\}$, satisfy the ODE
    \begin{align}
    \label{EQ:temperature_simple}
\frac{\diff T_i}{\diff t}
  &= \frac{1}{\varepsilon}\sum_j\lambda_{i,j}\beta_{j,i}(T_j-T_i) + \frac{1}{\varepsilon d}\sum_j\lambda_{i,j}m_i\alpha_{j,i}(\alpha_{j,i}+\beta_{i,j})|\bfu_i-\bfu_j|^2.
    \end{align}
    \end{lemma}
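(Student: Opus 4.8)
The plan is to exploit the algebraic relation between energy, temperature, and bulk velocity. From the definition \eqref{eqn:energy-def} one has $\frac{d}{2}n_i T_i = E_i - \frac{1}{2}\rho_i|\bfu_i|^2$, and since $\rho_i$ and $n_i$ are constant in time by \eqref{eq:ConservationSpeciesMass}, differentiating gives
\[
\frac{d}{2}n_i\frac{\diff T_i}{\diff t} = \frac{\diff E_i}{\diff t} - \rho_i\,\bfu_i\cdot\frac{\diff\bfu_i}{\diff t}.
\]
I would then substitute the momentum equation \eqref{EQ:velocityODESystem1} for $\rho_i\frac{\diff\bfu_i}{\diff t} = \frac{\diff(\rho_i\bfu_i)}{\diff t}$, together with an expression for $\frac{\diff E_i}{\diff t}$, and reduce the right-hand side to the claimed form after dividing by $\frac{d}{2}n_i$.

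Rather than work directly from \eqref{EQ:energyODESystem_1}, I would first re-express the energy flux in the cleaner form $\frac{\diff E_i}{\diff t} = \frac{1}{\varepsilon}\sum_j\lambda_{i,j}\left(\frac{1}{2}\rho_i|\bfu_{i,j}|^2 + \frac{d}{2}n_i T_{i,j} - E_i\right)$, obtained by taking the $\frac{1}{2}m_i|\bfv|^2$ moment of \eqref{EQ:Hom_MS_BGK_Eq} and using the Maxwellian moment identities for $M_{i,j} = M_{n_i,\bfu_{i,j},T_{i,j}/m_i}$. I would also record that $A_{i,j} = \rho_i\lambda_{i,j}\alpha_{j,i}$ directly from \eqref{eq:ABSC_defs} and the definition of $\alpha_{j,i}$, so that the momentum contribution becomes $\rho_i\bfu_i\cdot\frac{\diff\bfu_i}{\diff t} = \frac{\rho_i}{\varepsilon}\sum_j\lambda_{i,j}\alpha_{j,i}\,\bfu_i\cdot(\bfu_j-\bfu_i)$.

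Substituting $E_i = \frac{1}{2}\rho_i|\bfu_i|^2 + \frac{d}{2}n_i T_i$ into the flux and using the decomposition \eqref{EQ:mixtureAlphaBeta}, each summand splits into a temperature part $\frac{d}{2}n_i(T_{i,j}-T_i) = \frac{d}{2}n_i\beta_{j,i}(T_j-T_i) + \frac{1}{2}\rho_i\alpha_{j,i}\beta_{i,j}|\bfu_i-\bfu_j|^2$ (using $\beta_{i,j}+\beta_{j,i}=1$ and $\rho_i = m_i n_i$) and a velocity part $\frac{1}{2}\rho_i(|\bfu_{i,j}|^2-|\bfu_i|^2)$. The temperature-difference term already yields the first sum in \eqref{EQ:temperature_simple} after dividing by $\frac{d}{2}n_i$. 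For the velocity part I would invoke the identity $\bfu_{i,j}-\bfu_i = \alpha_{j,i}(\bfu_j-\bfu_i)$, which follows from \eqref{EQ:mixtureAlphaBeta} and $\alpha_{i,j}+\alpha_{j,i}=1$, to write $\frac{1}{2}(|\bfu_{i,j}|^2-|\bfu_i|^2) = \alpha_{j,i}\,\bfu_i\cdot(\bfu_j-\bfu_i) + \frac{1}{2}\alpha_{j,i}^2|\bfu_i-\bfu_j|^2$. The linear cross term cancels exactly against the momentum contribution, leaving $\frac{1}{2}\alpha_{j,i}^2|\bfu_i-\bfu_j|^2$; combining this with the $\alpha_{j,i}\beta_{i,j}$ term from the temperature part gives the factor $\alpha_{j,i}(\alpha_{j,i}+\beta_{i,j})$, and the prefactor $\frac{\rho_i}{2}$ becomes $\frac{m_i}{d}$ upon dividing by $\frac{d}{2}n_i$ (again via $\rho_i = m_i n_i$), which recovers the second sum in \eqref{EQ:temperature_simple}.

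The main obstacle is the bookkeeping in the velocity part, specifically the exact cancellation of the linear $\bfu_i\cdot(\bfu_j-\bfu_i)$ terms between the energy flux and the momentum equation. This cancellation is transparent in the $\frac{1}{2}\rho_i|\bfu_{i,j}|^2$ form above, but it is obscured if one works directly from \eqref{EQ:energyODESystem_1}, where the velocity dependence is carried by $S_{i,j}=|\bfu_{i,j}|^2$ and the seemingly leftover $m_j$-dependent terms vanish only after the mass dependence hidden in $\alpha_{i,j}$ (through $\rho_i = m_i n_i$) is taken into account. Choosing the Maxwellian-moment form of the energy flux at the outset is precisely what keeps this step routine.
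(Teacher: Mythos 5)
Your proof is correct and follows essentially the same route as the paper: a direct calculation that differentiates $\tfrac{d}{2}n_iT_i = E_i - \tfrac{1}{2}\rho_i|\bfu_i|^2$, eliminates the velocity derivative via the momentum equation \eqref{EQ:velocityODESystem1} together with $A_{i,j}=\rho_i\lambda_{i,j}\alpha_{j,i}$, and expands the mixture quantities using \eqref{EQ:mixtureAlphaBeta}. The only difference is organizational: you start from the Maxwellian-moment form of the energy flux rather than from \eqref{EQ:energyODESystem_1}, which replaces the paper's use of the identities $\alpha_{j,i}^2=\alpha_{j,i}-\alpha_{i,j}\alpha_{j,i}=2\alpha_{j,i}+\alpha_{i,j}^2-1$ with the cleaner expansion of $|\bfu_{i,j}|^2-|\bfu_i|^2$ via $\bfu_{i,j}-\bfu_i=\alpha_{j,i}(\bfu_j-\bfu_i)$; both versions check out and yield \eqref{EQ:temperature_simple} exactly.
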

    
    \begin{proof}
The proof is a direct calculation.
From  \eqref{EQ:velocityODESystem1} and the fact that $A_{i,j} = \alpha_{j,i}\rho_i\lambda_{i,j}$, it follows that
    \begin{equation}
\rho_i\frac{\diff}{\diff t}|\bfu_i|^2
  = \frac{1}{\varepsilon}\sum_j\rho_i\lambda_{i,j}(2\alpha_{j,i}\langle\bfu_i,\bfu_j\rangle - 2\alpha_{j,i}|\bfu_i|^2),
    \label{EQ:s_iDeriv}
    \end{equation}
where $\langle\cdot,\cdot\rangle$ denotes the standard Euclidean inner product on $\R^d$.
The equation in \eqref{EQ:temperature_simple} can be obtained by
(i) using \eqref{eqn:energy-def} to express \eqref{EQ:energyODESystem_1} in terms of the temperatures and velocities;
(ii) invoking \eqref{EQ:s_iDeriv} to eliminate the time derivative of $|\bfu_i|^2$;
(iii) using the formula in \eqref{EQ:mixtureAlphaBeta} to replace $\bfu_{i,j}$ wherever it appears;
and (iv) applying the elementary relations
$\alpha_{j,i}^2 = \alpha_{j,i}- \alpha_{i,j}\alpha_{j,i} = 2\alpha_{j,i}+\alpha_{i,j}^2-1$.
    \end{proof}

    \begin{theorem}
    \label{thm:temperatureBoundedBelow}
Suppose that for some $t_{\textnormal{f}} >0$, there exists a local solution $\bfU\in C^1\left([0,t_{\textnormal{f}}]; \mathbb{R}^{N\times d}\right)$
and $\bfE\in C^1\left([0,t_{\textnormal{f}}];\mathbb{R}^N\right)$ to the system given by \eqref{EQ:Full_ODE_System}, with initial conditions $\bfU(0)$ and $\bfE(0)$ such that $T_i(0) > 0$ for all $i \in \{1,\cdots, N\}$. 
Then $T_i(t)\geq T_{\min}\coloneqq 
\min_k\{T_k(0)\}$ for all $t\in[0,t_{\textnormal{f}}]$ and all $i\in\{1,\cdots,N\}$.
    \end{theorem}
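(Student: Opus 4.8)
The plan is to establish the lower bound via a minimum-envelope argument applied to the temperature ODE in \Cref{lemma:simpleTemp}. Define the minimum temperature envelope $\Tmin(t) \coloneqq \min_k T_k(t)$. Since each $T_k \in C^1$, this envelope is continuous and piecewise differentiable (Lipschitz), so it suffices to show that whenever some species $i$ realizes the minimum, its temperature cannot decrease below the current envelope value. Concretely, I would fix a time $t_*$ and an index $i$ with $T_i(t_*) = \Tmin(t_*)$, and examine the right-hand side of \eqref{EQ:temperature_simple} at that instant.

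The key observation is a sign analysis of the two sums in \eqref{EQ:temperature_simple}. In the first sum, $\frac{1}{\varepsilon}\sum_j \lambda_{i,j}\beta_{j,i}(T_j - T_i)$, each factor $\lambda_{i,j} > 0$ and $\beta_{j,i} \in (0,1)$ is positive, and when $T_i$ is the minimum we have $T_j - T_i \geq 0$ for every $j$; hence the entire first sum is nonnegative. The second sum, $\frac{1}{\varepsilon d}\sum_j \lambda_{i,j} m_i \alpha_{j,i}(\alpha_{j,i} + \beta_{i,j})|\bfu_i - \bfu_j|^2$, is manifestly nonnegative, since $m_i > 0$, $\alpha_{j,i} \in (0,1)$, $\beta_{i,j} \in (0,1)$, and the squared norm is nonnegative. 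Therefore $\frac{\diff T_i}{\diff t}\big|_{t_*} \geq 0$ at any point where $T_i$ attains the minimum.

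From this instantaneous nonnegativity I would conclude monotonicity of the envelope: $\Tmin$ is nondecreasing on $[0, t_{\textnormal{f}}]$. The cleanest way to make this rigorous is to argue by contradiction or via a standard differential-inequality / barrier argument. Suppose $\Tmin(t) < \Tmin(0)$ for some $t$; let $t_0 = \sup\{s \le t : \Tmin(s) \ge \Tmin(0)\}$ and pick an index $i$ realizing the minimum just after $t_0$, so that $T_i$ drops below its initial-based floor. But the sign analysis forces $T_i$ to be nondecreasing at any time it equals the running minimum, contradicting a decrease. Consequently $\Tmin(t) \geq \Tmin(0) = \min_k T_k(0)$ for all $t \in [0, t_{\textnormal{f}}]$, which gives $T_i(t) \geq \Tmin(0)$ for every $i$, as claimed.

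The main obstacle is the nonsmoothness of the envelope $\Tmin(t)$: it is only Lipschitz, not $C^1$, because the index attaining the minimum can switch. I would need to handle this carefully rather than naively differentiating $\Tmin$. The standard remedy is to invoke a result on the lower Dini derivative of a pointwise minimum of $C^1$ functions (or equivalently use the fact that a Lipschitz function which has nonnegative one-sided derivative at every point where it equals the minimizing branch must be nondecreasing), which reduces the whole argument to the pointwise sign computation already carried out. A secondary technical point is confirming that $\beta_{j,i}$, $\alpha_{j,i}$, and $\alpha_{j,i} + \beta_{i,j}$ are all strictly positive for physically meaningful (positive) $\rho_k$, $n_k$, $\lambda_{i,j}$; this follows immediately from their definitions, so it poses no real difficulty.
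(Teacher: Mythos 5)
Your proof is correct, and it reaches the stated bound by a genuinely different mechanism than the paper. Both arguments rest on the same structural facts extracted from \Cref{lemma:simpleTemp}: the second sum in \eqref{EQ:temperature_simple} is manifestly nonnegative, and the first sum is nonnegative at any index realizing the running minimum. But where you convert this pointwise sign information directly into monotonicity of the envelope $\min_k T_k(t)$ --- which forces you to confront the nonsmoothness of that envelope via a Dini-derivative (right-derivative-of-a-finite-minimum) lemma, the obstacle you correctly flag --- the paper deliberately avoids ever differentiating the envelope. It instead writes the differential inequality $\frac{\diff T_i}{\diff t} + c_i(t) T_i \geq c_i(t) T_\star(t)$ with $c_i(t) = \frac{1}{\varepsilon}\sum_j \lambda_{i,j}\beta_{j,i}$, applies the integrating factor $e^{a_i(t)}$ with $a_i(t)=\int_0^t c_i(s)\,\diff s$, integrates over $[0,t_\star]$ where $t_\star$ is the infimum of the set on which $T_\star$ drops below $\frac{1}{1+\gamma}T_\star(0)$, and derives a contradiction by evaluating the resulting integral inequality at the minimizing index, finally letting $\gamma \to 0$. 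Since only the individual $C^1$ temperatures are ever differentiated, no nonsmooth-analysis lemma is required. Your route is shorter and delivers the slightly stronger conclusion that the minimum envelope is nondecreasing on all of $[0,t_{\textnormal{f}}]$ (a fact the paper asserts in its introduction but does not state in this theorem), at the cost of importing the standard result that a Lipschitz function whose right derivative at each point equals the minimum of the derivatives of the active $C^1$ branches, and is nonnegative there, must be nondecreasing; the paper's route is entirely self-contained and elementary. Do make sure, if you write this up, that the barrier step is phrased through that Dini-derivative result rather than through the informal ``pick an index just after $t_0$'' sketch, since the minimizing index can switch infinitely often near $t_0$.
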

    
    \begin{proof}
The proof relies on an integrating factor technique that yields a Gr\"onwall-type estimate.
Define a lower temperature envelope
$T_\star(t) = \min_{k}\{T_k(t)\}$.
From \eqref{EQ:temperature_simple} it follows that, for $i \in \{1,\cdots,N\}$,
    \begin{align}
\frac{\diff T_i}{\diff t} + c_i(t)T_i(t) &\geq c_i(t)T_\star(t),
    \quad \text{ where }\quad
c_i(t) = \frac{1}{\varepsilon}\sum_j\lambda_{ij}(t)\beta_{ji}(t).
    \label{EQ:diff_ineq_1}
    \end{align}
Let $a_i(t) = \int_0^tc_i(s)\diff s$.
Then $a'_i(t)=c_i(t)$ and 
    \begin{align}
\frac{\diff}{\diff t}\left(e^{a_i(t)}T_i(t)\right)
    \overset{\eqref{EQ:diff_ineq_1}}{\geq} e^{a_i(t)}c_i(t)T_\star(t)
    = \frac{\diff}{\diff t}\left(e^{a_i(t)}\right)T_\star(t).
    \label{EQ:diff_ineq_2}
    \end{align}
For each $\gamma>0$, define the set 
    \begin{align}
S_\gamma = \left\{t \in [0,t_{\textnormal{f}}]\;\middle|\;T_\star(t)<\frac{1}{1+\gamma}T_\star(0)\right\}.
    \end{align}
We now argue by contradiction.
Assume that $S_\gamma$ is nonempty and let $t_\star = \inf S_\gamma$.
Then by definition,
    \begin{align}
T_\star(t_\star) \leq \frac{1}{1+\gamma}T_\star(0) < T_\star(0). \label{EQ:ineq_2}
    \end{align}
However, because $T_i$ is continuous on $[0,t_{\textnormal{f}}]$, it follows that
$T_\star(t)\geq T_\star(t_\star)$ for every $t\in[0,t_\star]$, which, along with \eqref{EQ:diff_ineq_2}, implies that
    \begin{align}
    \label{EQ:diff_ineq_3}
\frac{\diff}{\diff t}\left(e^{a_i(t)}T_i(t)\right) \geq \frac{\diff}{\diff t}\left(e^{a_i(t)}\right)T_\star(t_\star),
    \qquad \forall t\in[0,t_\star].
    \end{align}
Integrate both sides of \eqref{EQ:diff_ineq_3} on $[0,t_\star]$, multiply the result by $e^{-a_i(t_\star)} \geq 0$, and use $T_i(0)\geq T_{\star}(0)$, to get
    \begin{equation}
    \begin{split}
T_i(t_\star)
  &\geq T_i(0)e^{-a_i(t_\star)} + T_\star(t_\star)\left(1-e^{-a_i(t_\star)}\right)
    \\
  &\geq T_\star(0) e^{-a_i(t_\star)} + T_\star(t_\star)\left(1-e^{-a_i(t_\star)}\right).
    \label{EQ:int_ineq_1}
    \end{split}
    \end{equation}
Define $i_\star$ such that $T_{i_\star}(t_\star) = T_\star(t_\star)$.
If $i = i_{\star}$, then  \eqref{EQ:int_ineq_1} becomes
    \begin{align}
T_\star(t_\star) \equiv T_{i_\star}(t_\star) \geq T_\star(0)e^{-a_{i_\star}(t_\star)} + T_\star(t_\star)\left(1-e^{-a_{i_\star}(t_\star)}\right),
    \label{EQ:ineq_1}
    \end{align}
which, after some simple algebra, implies that $T_\star(t_\star) \geq T_\star(0)$.
However this results contradicts 
\eqref{EQ:ineq_2}
which means that no such $t_\star$ exists, and the set $S_\gamma$ must be empty.
As a consequence, $T_\star(t)\geq\frac{1}{1+\gamma}T_\star(0)$ for all $t \in [0, t_{\textnormal{f}}]$.
Since $\gamma > 0$ is arbitrary, the limit $\gamma\to0$ yields $T_\star(t) \geq T_\star(0)=T_{\textnormal{min}}$ for all $t \in [0, t_{\textnormal{f}}]$.
The proof is complete.
    \end{proof}

    \subsection{Velocity Bounds}

For each component $k\in\{1,\cdots,d\}$, define the lower and upper velocity envelopes
    \begin{align}
\underline{u}_k(t) = \min_j\{U_{j,k}(t)\},
    \qquad
\overline{u}_k(t) = \max_j\{U_{j,k}(t)\},
    \end{align}
    and the components of the vector $\bfu_{\max}\in\R^d$ 
    \begin{align}
\left[\bfu_{\max}(t)\right]_{k} &= \max\left\{\left|\underline{u}_k(t)\right|,\left|\overline{u}_k(t)\right|\right\}.
    \end{align}
    
    \begin{theorem}
    \label{thm:velocityBounds}
Suppose that the assumptions of \Cref{thm:temperatureBoundedBelow} hold.
Then for each $k\in\{1,\cdots,d\}$, the velocity components follow the inequalities
    \begin{align}\label{EQ:velocity_inequalities}
\underline{u}_k(0)
    \leq
\underline{u}_k(t)
    \leq
U_{i,k}(t)
    \leq
\overline{u}_k(t)
    \leq
\overline{u}_k(0)
    \end{align}
for all $t\in[0,t_{\textnormal{f}}]$ and all $i\in\{1,\cdots,N\}$.
Further, for any $t\in[0,t_{\textnormal{f}}]$, and all $i\in\{1,\cdots,N\}$,
    \begin{align}
    \label{eq:u_max}
\left\|\bfu_i(t)\right\|_2 &\leq \left\|\bfu_{\max}(t)\right\|_2 \leq \left\|\bfu_{\max}(0)\right\|_2 \eqqcolon u_{\max}.
    \end{align}
    \end{theorem}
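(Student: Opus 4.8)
The plan is to reduce the vector statement to a family of scalar maximum-principle arguments, one for each velocity component, and to exploit the fact that the component dynamics inherit exactly the structure already handled in \Cref{thm:temperatureBoundedBelow}. Fix a component $k\in\{1,\dots,d\}$ and write $v_i(t) = U_{i,k}(t)$. Since each $\rho_i$ is constant by \eqref{eq:ConservationSpeciesMass}, \eqref{EQ:velocityODESystem1} gives $\rho_i\dot v_i = \tfrac{1}{\varepsilon}\sum_j A_{i,j}(v_j-v_i)$, which I would rewrite as
\[
\dot v_i + \tilde c_i v_i = \frac{1}{\varepsilon\rho_i}\sum_j A_{i,j} v_j, \qquad \tilde c_i(t) \coloneqq \frac{1}{\varepsilon\rho_i}\sum_j A_{i,j}(t).
\]
Because every $A_{i,j}\geq 0$ (each is a ratio of products of the positive quantities $\rho_i,\rho_j,\lambda_{i,j},\lambda_{j,i}$), replacing $v_j$ by the lower envelope $\underline{u}_k(t)$ inside the sum yields the differential inequality $\dot v_i + \tilde c_i v_i \geq \tilde c_i\,\underline{u}_k(t)$ for every $i$. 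This is precisely the analog of \eqref{EQ:diff_ineq_1}, now with the nonnegative velocity-difference source term of the temperature equation simply absent. The coefficient $\tilde c_i$ is continuous and nonnegative along the assumed $C^1$ solution (the $\lambda_{i,j}$ in \eqref{eqn:collision-freqs-hhm} are continuous once $T_i>0$, as guaranteed by \Cref{thm:temperatureBoundedBelow}), so the integrating factor $a_i(t) = \int_0^t \tilde c_i(s)\,\diff s$ is well defined.

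From here I would run the identical integrating-factor-plus-contradiction argument used for the temperature, with one adjustment: since the velocity components may take any sign, the multiplicative gap $\tfrac{1}{1+\gamma}T_\star(0)$ must be replaced by an additive one. Concretely I would set $S_\gamma = \{t\in[0,t_{\textnormal{f}}] : \underline{u}_k(t) < \underline{u}_k(0)-\gamma\}$, let $t_\star = \inf S_\gamma$, and use continuity together with $t\notin S_\gamma$ for $t<t_\star$ to obtain $\underline{u}_k(t)\geq\underline{u}_k(t_\star)$ on $[0,t_\star]$. Integrating $\frac{\diff}{\diff t}(e^{a_i}v_i) \geq \frac{\diff}{\diff t}(e^{a_i})\,\underline{u}_k(t_\star)$ on $[0,t_\star]$, multiplying by $e^{-a_i(t_\star)}$, and evaluating at the minimizing index $i_\star$ with $v_{i_\star}(t_\star)=\underline{u}_k(t_\star)$ collapses to $\underline{u}_k(t_\star)\geq\underline{u}_k(0)$, contradicting the definition of $t_\star$. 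Hence $S_\gamma=\emptyset$ for every $\gamma>0$, and letting $\gamma\to 0$ gives $\underline{u}_k(t)\geq\underline{u}_k(0)$. The upper bound $\overline{u}_k(t)\leq\overline{u}_k(0)$ follows most cleanly by symmetry: replacing $U_{i,k}\mapsto -U_{i,k}$ for all $i$ leaves the temperatures (hence every $\lambda_{i,j}$ and $A_{i,j}$) invariant, since they depend on the velocities only through the terms $|\bfu_i-\bfu_j|^2$, and it maps the component-$k$ velocity system to itself; applying the lower bound to the negated trajectory gives the claim. Combining these with the trivial inequalities $\underline{u}_k(t)\leq U_{i,k}(t)\leq\overline{u}_k(t)$ establishes the full chain \eqref{EQ:velocity_inequalities}.

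Finally, \eqref{eq:u_max} is a consequence of interval nesting. The bounds just proved give $[\underline{u}_k(t),\overline{u}_k(t)]\subseteq[\underline{u}_k(0),\overline{u}_k(0)]$, and since $[\bfu_{\max}]_k = \max\{|\underline{u}_k|,|\overline{u}_k|\}$ is exactly the largest absolute value attained on that interval, monotonicity of this quantity under inclusion yields $[\bfu_{\max}(t)]_k \leq [\bfu_{\max}(0)]_k$ for each $k$; squaring and summing over $k$ gives $\|\bfu_{\max}(t)\|_2 \leq \|\bfu_{\max}(0)\|_2$. The bound $\|\bfu_i(t)\|_2 \leq \|\bfu_{\max}(t)\|_2$ then follows componentwise from $|U_{i,k}(t)| \leq [\bfu_{\max}(t)]_k$.

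I anticipate the main obstacle to be the non-differentiability of the envelopes $\underline{u}_k$ and $\overline{u}_k$, which are only Lipschitz as pointwise minima and maxima of $C^1$ functions; this is exactly why I would avoid differentiating an envelope directly and instead route the estimate through the smooth individual components via the contradiction argument, mirroring \Cref{thm:temperatureBoundedBelow}. The only genuinely new wrinkle relative to that proof is accommodating velocities of arbitrary sign, which the additive gap in the definition of $S_\gamma$ resolves.
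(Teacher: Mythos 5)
Your proof is correct and follows essentially the same route as the paper: derive the componentwise differential inequality $\frac{\diff}{\diff t}U_{i,k} + c_i U_{i,k} \geq c_i\,\underline{u}_k(t)$ (note $A_{i,j}/\rho_i = \lambda_{i,j}\alpha_{j,i}$, so your $\tilde c_i$ is exactly the paper's $c_i$), then rerun the integrating-factor contradiction argument of \Cref{thm:temperatureBoundedBelow}, and finish \eqref{eq:u_max} with the same scalar observation $a\leq b\leq c \Rightarrow |b|\leq\max\{|a|,|c|\}$. Your point that the multiplicative gap $\tfrac{1}{1+\gamma}$ must be replaced by an additive one because velocity components may be nonpositive is a real detail that the paper's ``follows that of \Cref{thm:temperatureBoundedBelow}'' elides, and your additive version (together with the sign-flip symmetry for the upper bound) is a correct way to fill it in.
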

    
    \begin{proof}
The proofs of the inequalities in \eqref{EQ:velocity_inequalities} are similar to the proof of the temperature lower bound given in \Cref{thm:temperatureBoundedBelow}.
From the velocity equation, \eqref{EQ:velocityODESystem1},
    \begin{align}
\frac{\diff}{\diff t}U_{i,k} + \left(\frac{1}{\varepsilon}\sum_j\lambda_{i,j}\alpha_{j,i}\right)U_{i,k} &= \frac{1}{\varepsilon}\sum_j\lambda_{i,j}\alpha_{j,i}U_{j,k},
    \end{align}
we obtain, using the definitions
$U_{j,k}(t)\leq \overline{u}_k(t)$, and $c_i = \frac{1}{\varepsilon}\sum_j\lambda_{i,j}\alpha_{j,i}$,
    \begin{align}
\frac{\diff}{\diff t}U_{i,k} + c_i(t)U_{i,k}
  \leq c_i(t)\overline{u}_k(t).
    \label{EQ:velProof_ineq1}
    \end{align}
From this point the proof of the upper bound in \eqref{EQ:velocity_inequalities} follows that of \Cref{thm:temperatureBoundedBelow}, with \eqref{EQ:velProof_ineq1} being the  analog of \eqref{EQ:diff_ineq_1}.
The proof of the lower bound is even more similar.
    
Finally, the bound in \eqref{eq:u_max} follows from \eqref{EQ:velocity_inequalities}.
Specifically, if $a$, $b$, and $c$ are real-valued scalars such that $a\leq b\leq c$, then $|b|\leq\max\{|a|,|c|\}$.
Hence
    \begin{align}
\left\|\bfu_i\right\|_2^2 
  =\sum_{k=1}^dU_{i,k}^2
  \leq \sum_{k=1}^d\max\left\{\left|\underline{u}_k(0)\right|,\left|\overline{u}_k(0)\right|\right\}^2
  = \left\|\bfu_{\max}(0)\right\|_2^2 = u_{\max}^2.
    \end{align}
    \end{proof}

    \subsection{Existence and Uniqueness}

According to \eqref{eq:ConservationSpeciesMass}, the number densities $n_i$ and mass densities $\rho_i$ are non-negative constants in time:
    \begin{equation}
n_i(0) = n_i(t), \quad \rho_i(0) = \rho_i(t) = m_i n_i(t), \quad \forall \, i\in\{1,\cdots , N\},
    \quad t\in [0,t_{\textnormal{f}}].
    \end{equation}
To avoid any degenerate cases, we assume that $n_{\textnormal{min}} \coloneqq \min_j n_j(0) > 0$ and $\rho_{\textnormal{min}} \coloneqq \min_j \rho_j(0) > 0$.
Then, given  any velocity-energy pair $(\bfu_i,E_i)$, the associated temperature $T_i$ is given by (cf. \eqref{eqn:energy-def})
    \begin{equation}
    \label{eq:vartheta_defn}
T_i = \vartheta_i(\bfu_i,E_i) \coloneqq \frac{2}{dn_i}E_i-\frac{m_i}{d}|\bfu_i|^2,
    \end{equation}
and the set of all realizable velocity and energy states is given by
    \begin{equation}
    \label{eq:R_defn}
\mathcal{R} 
= \left\{ ({\bfU},\bfE)\in\R^{N\times d}\times \R^N
\;  \Big | \;  \vartheta_i(\bfu_i,E_i) \geq 0, \;  \forall \, i\in\{1,\cdots , N\} 
\right \},
    \end{equation}
where $\vartheta_i$ is defined in \eqref{eq:vartheta_defn}.
Given positive scalars $\Tmin \in (0,\infty)$ and $\Etot \in (0,\infty)$, let
    \begin{equation}
\mathcal{D}(\Tmin, \Etot) 
= \left\{ ({\bfU},\bfE)\in
\mathcal{R}
\; \Bigg | \; 
\sum_{i=1}^N E_i 
= \Etot 
,\;
\vartheta_i(\bfu_i,E_i) \geq \Tmin, \; \forall \, i\in\{1,\cdots , N\} 
\right\}
    \end{equation}
and recall that by \Cref{thm:velocityBounds},
$|\bfu_i|\leq u_{\max}\eqqcolon\left\|\bfu_{\max}(0)\right\|_2\leq \sqrt{\frac{2\Etot}{ \rho_{\textnormal{min}}}}$
for all $(\bfU,\bfE) \in \mathcal{D}(\Tmin, \Etot)$.
Thus $\mathcal{D}(\Tmin, \Etot)$ is a closed and bounded subset of $\R^{N\times d}\times\R^N$.

    \begin{lemma}
    \label{lemma:setBoundInvarianceLemma}
Let $\Tmin \in (0,\infty)$ and $\Etot \in (0,\infty)$ be given.
Suppose that for some $t_{\textnormal{f}} >0$, there exists a local solution $\bfU\in C^1\left([0,t_{\textnormal{f}}];\mathbb{R}^{N\times d}\right)$ and $\bfE\in C^1\left([0,t_{\textnormal{f}}];\mathbb{R}^N\right)$ of the system in \eqref{EQ:Full_ODE_System} with initial condition $(\bfU^0, \bfE^0) \in \mathcal{D}(\Tmin, \Etot)$.
Then for all $t\in[0,t_{\textnormal{f}}],$ $(\bfU(t),\bfE(t)) \in \mathcal{D}(\Tmin, \Etot)$.
In particular, $\Tmin \leq T_i(t) \leq \Tmax\coloneqq\frac{2}{d n_{\textnormal{min}}} \Etot$ for all $t\in [0,t_{\textnormal{f}}]$.
    \end{lemma}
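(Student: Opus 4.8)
The plan is to verify, one condition at a time, the three requirements that define membership in $\mathcal{D}(\Tmin, \Etot)$, and then to read off the upper temperature bound as a consequence of energy conservation. The substantive analytic work has already been carried out in \Cref{thm:temperatureBoundedBelow}; what remains is to combine it with conservation of total energy and a pair of elementary algebraic estimates, so I do not anticipate a genuine obstacle.

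First I would record that the total-energy constraint is preserved. This is immediate from the conservation law in \eqref{EQ:PDEConservationLaws}, which gives $\frac{\diff}{\diff t}\sum_i E_i = 0$ and hence $\sum_i E_i(t) = \sum_i E_i(0) = \Etot$ for all $t\in[0,t_{\textnormal{f}}]$. Next I would invoke the lower temperature bound. Since the initial data lies in $\mathcal{D}(\Tmin, \Etot)$, we have $T_i(0) \geq \Tmin > 0$ for every $i$, so the positivity hypothesis of \Cref{thm:temperatureBoundedBelow} holds and $\min_k T_k(0) \geq \Tmin$. That theorem then yields $T_i(t) \geq \min_k T_k(0) \geq \Tmin$ for all $t\in[0,t_{\textnormal{f}}]$ and all $i$. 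Because $T_i(t) = \vartheta_i(\bfu_i(t), E_i(t))$, this simultaneously delivers the constraint $\vartheta_i \geq \Tmin$ and, since $\Tmin>0$, realizability $(\bfU(t),\bfE(t))\in\mathcal{R}$. Together with the energy constraint, these three facts establish $(\bfU(t),\bfE(t))\in\mathcal{D}(\Tmin,\Etot)$ for all $t\in[0,t_{\textnormal{f}}]$.

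Finally I would deduce the upper temperature bound. From \eqref{eqn:energy-def} and the nonnegativity of $T_i$ just obtained, each energy satisfies $E_i = \frac{1}{2}\rho_i|\bfu_i|^2 + \frac{d}{2}n_i T_i \geq 0$, so $0\leq E_i \leq \sum_j E_j = \Etot$. Combining this with the formula $\vartheta_i(\bfu_i,E_i) = \frac{2}{dn_i}E_i - \frac{m_i}{d}|\bfu_i|^2 \leq \frac{2}{dn_i}E_i$ from \eqref{eq:vartheta_defn} and the bound $n_i \geq n_{\textnormal{min}}$ gives $T_i \leq \frac{2}{dn_i}\Etot \leq \frac{2}{dn_{\textnormal{min}}}\Etot = \Tmax$, as claimed. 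The only point requiring mild care is to match the parameter $\Tmin$ of this lemma against the data-dependent quantity $\min_k T_k(0)$ produced by \Cref{thm:temperatureBoundedBelow}, and to observe that the nonnegativity of the energies used in the upper bound relies on the lower temperature bound established just before it, so that no circularity arises.
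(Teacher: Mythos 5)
Your proof is correct and follows essentially the same route as the paper's: total-energy conservation from \eqref{EQ:PDEConservationLaws}, the lower temperature bound imported from \Cref{thm:temperatureBoundedBelow}, and the algebraic estimate $T_i \le \frac{2}{dn_i}E_i \le \frac{2}{dn_{\textnormal{min}}}\Etot = \Tmax$. Your explicit observation that $E_i \le \Etot$ requires the nonnegativity of every $E_j$, which in turn rests on the lower temperature bound just established, is a step the paper leaves implicit, but the argument is otherwise identical.
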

    
    \begin{proof}
To conclude that $(\bfU(t),\bfE(t)) \in \mathcal{D}$ for all $t\in [0,t_{\textnormal{f}}]$, two conditions must be satisfied.
The first condition:
    \begin{equation}
\sum_{i=1}^NE_i(t) = \sum_{i=1}^NE_i(0) = \Etot, \quad \forall \, t\in [0,t_{\textnormal{f}}],
    \end{equation} 
follows immediately from \eqref{EQ:PDEConservationLaws}, and the upper bound $T_i(t) \leq \Tmax$ follows:
    \begin{equation}
T_i(t)
    = \frac{2}{d n_i}E_i-\frac{m_i}{d}|\bfu_i|^2
    \leq \frac{2}{d n_{\textnormal{min}}} \Etot
    = \Tmax.
    \end{equation}
The second condition:
    \begin{equation}
\Tmin
    \leq T_i(t)
    =\vartheta(\bfu_i(t),E_i(t)),
    \quad \forall \, t\in [0,t_{\textnormal{f}}],
    \end{equation}
is a direct consequence of \Cref{thm:temperatureBoundedBelow}.
    \end{proof}
    
    \begin{theorem}
    \label{theorem:existenceAndUniqueness}
Suppose that $m_i >0 $ and $n_i>0$ for all $i \in \{1, \cdots, N\}$.
Then for any $(\bfU^0,\bfE^0) \in \operatorname{int} \mathcal{R}$, there exists a global, unique solution $(\bfU,\bfE) \in C^1([0,\infty);\mathcal{R})$ of the system \eqref{EQ:Full_ODE_System} with the initial conditions $(\bfU(0),\bfE(0)) = (\bfU^0,\bfE^0)$.
Moreover, the associated temperatures $T_i(t) = \vartheta_i(\bfu_i(t),E_i(t))$ are bounded below by their initial values; that is 
    \begin{equation}
\min_i T_i(t) \geq \min_i T_i(0),
    \qquad 
\forall t > 0.
    \end{equation}
    \end{theorem}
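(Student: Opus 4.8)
The plan is to obtain a unique local solution from the Picard--Lindel\"of theorem and then promote it to a global one by combining the standard continuation principle with the a priori confinement furnished by \Cref{lemma:setBoundInvarianceLemma}. The essential difficulty, and the reason the argument is not a bare appeal to classical Lipschitz ODE theory, is the square-root dependence of the collision frequencies in \eqref{eqn:collision-freqs-hhm}: the map $T\mapsto\sqrt{T}$ fails to be Lipschitz at the origin, so the right-hand side of \eqref{EQ:Full_ODE_System} is only locally, not globally, Lipschitz, and quantitative control on how close the temperatures come to zero is indispensable.

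First I would write the system \eqref{EQ:Full_ODE_System} in the autonomous form $\tfrac{\diff}{\diff t}(\bfU,\bfE)=\Phi(\bfU,\bfE)$ and argue that the vector field $\Phi$ is $C^1$ (hence locally Lipschitz) on the open set $\operatorname{int}\mathcal{R}=\{(\bfU,\bfE):\vartheta_i(\bfu_i,E_i)>0\ \forall i\}$. Each temperature $T_i=\vartheta_i(\bfu_i,E_i)$ is a polynomial in $(\bfu_i,E_i)$ by \eqref{eq:vartheta_defn}, and on $\operatorname{int}\mathcal{R}$ every $T_i$ is strictly positive, so the radicand $T_i/m_i+T_j/m_j$ in \eqref{eqn:collision-freqs-hhm} is bounded away from zero and the collision frequencies are smooth there. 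Since $m_i,n_i,\rho_i>0$, the denominators $\rho_i\lambda_{i,j}+\rho_j\lambda_{j,i}$ and $n_i\lambda_{i,j}+n_j\lambda_{j,i}$ never vanish, so each entry of the matrices in \eqref{EQ:matrixDefinitions} depends smoothly on $(\bfU,\bfE)$, and therefore so does $\Phi$. The Picard--Lindel\"of theorem then yields a unique solution with $(\bfU(0),\bfE(0))=(\bfU^0,\bfE^0)$ on a maximal interval of existence $[0,t_{\max})$.

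Next I would rule out finite-time maximal existence. Setting $\Tmin\coloneqq\min_i T_i(0)>0$ and $\Etot\coloneqq\sum_i E_i^0$, the initial datum lies in $\mathcal{D}(\Tmin,\Etot)$. For any $t_{\mathrm f}\in(0,t_{\max})$ the restricted solution satisfies the hypotheses of \Cref{lemma:setBoundInvarianceLemma}, so $(\bfU(t),\bfE(t))\in\mathcal{D}(\Tmin,\Etot)$ for all $t\in[0,t_{\mathrm f}]$; letting $t_{\mathrm f}\uparrow t_{\max}$ confines the entire trajectory to the compact set $\mathcal{D}(\Tmin,\Etot)$. Because $\Tmin>0$, this set satisfies $\vartheta_i\geq\Tmin>0$ and is thus a compact subset of the open domain $\operatorname{int}\mathcal{R}$, hence has positive distance to $\partial\mathcal{R}$. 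The solution therefore remains in a fixed compact subset of the domain of $\Phi$ and can neither blow up nor approach the boundary, so the continuation principle forces $t_{\max}=\infty$, giving a global solution $(\bfU,\bfE)\in C^1([0,\infty);\mathcal{R})$.

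Finally, the temperature lower bound is immediate: applying \Cref{thm:temperatureBoundedBelow} on each interval $[0,t_{\mathrm f}]$ gives $T_i(t)\geq\min_k T_k(0)$ for every $i$, whence $\min_i T_i(t)\geq\min_i T_i(0)$ for all $t>0$. I expect the only genuinely delicate point to be the verification in the second paragraph that confinement to $\{T_i\geq\Tmin\}$ is precisely what restores the Lipschitz regularity destroyed by the square root; the remainder is the routine local-to-global continuation bookkeeping applied to the already-established invariant compact set.
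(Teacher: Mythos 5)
Your proposal is correct and follows essentially the same route as the paper: local Lipschitz continuity of the right-hand side away from $T_i=0$, Picard--Lindel\"{o}f for local existence and uniqueness, and the invariance of the compact set $\mathcal{D}(\Tmin,\Etot)$ from \Cref{lemma:setBoundInvarianceLemma} (itself resting on \Cref{thm:temperatureBoundedBelow}) to globalize. The only cosmetic difference is the continuation step: you invoke the maximal-interval, escape-from-compact-sets principle, whereas the paper iterates Picard--Lindel\"{o}f explicitly with a uniform step $\delta$ that cannot degrade because the Lipschitz constant and the bound on the vector field are uniform on a fixed $\epsilon$-neighborhood of $\mathcal{D}(\Tmin,\Etot)$ contained in $\operatorname{int}\mathcal{R}$.
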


    \begin{proof}
The system  \eqref{EQ:Full_ODE_System} can be written as
    \begin{equation}
\frac{\diff}{\diff t}
  \begin{pmatrix}
    \bfU \\ \bfE
  \end{pmatrix} 
 = \bff(\bfU,\bfE)
    \quad \text{where} \quad 
\bff(\bfU,\bfE) \coloneqq
  \begin{pmatrix}
    -\frac{1}{\varepsilon}P^{-1}(D-A)\bfU 
  \\
    -\frac{1}{\varepsilon}(F-B)Q^{-1}\bfE + \frac{1}{2\varepsilon}(G-C)M\mathbf{1} 
  \end{pmatrix}
.
    \end{equation}
Let $\Tmin = \min_i T_i(0)$ and $\Etot = \sum_i E_i(0)$.
Since $(\bfU,\bfE) \in \operatorname{int} \mathcal{R}$, it follows that $\Tmin > 0$ and $\Etot > 0$.
Moreover, there exists an $\epsilon>0$ such that the closed $\epsilon$-neighborhood of $\mathcal{D} (\Tmin, \Etot)$, denoted by $\mathcal{D}_\epsilon (\Tmin, \Etot)$,  is contained in $\operatorname{int}\mathcal{R}$, i.e.,  $\mathcal{D}_\epsilon (\Tmin, \Etot) \subset \operatorname{int}\mathcal{R}$.
In particular,
$  \min_i T_i = \min_i \vartheta(\bfu_i,E_i). $
is bounded below on $\mathcal{D}_\epsilon (\Tmin, \Etot)$.
Hence, $f$ is Lipschitz on $\mathcal{D}_\epsilon(\Tmin, \Etot)$ with Lipschitz constant $L>0$ and bound $\|\bff(\bfU,\bfE)\|_2 \le \mathcal{M} < \infty$.
Since $(\bfU^0,\bfE^0)\in\mathcal{D}$, there is a number $\beta = \beta(\epsilon)>0$, independent of the point $(\bfU^0,\bfE^0)$, such that the closed ball $\overline{B}\left((\bfU,\bfE),\beta\right)\subset \mathcal{D}_\epsilon$.
Appealing to the Picard-Lindelöf Theorem~\ref{thm:picard-lindelof}, there exists a unique solution on the interval $[0,\delta]$, where $\delta = \min\left(\frac{1}{2L},\frac{\beta}{\mathcal{M}}\right)$.
By \Cref{lemma:setBoundInvarianceLemma}, $(\bfU(t),\bfE(t))\in \mathcal{D}$, for all $t\in [0,\delta]$.
Since $(\bfU(\delta),\bfE(\delta))\in \mathcal{D}$, we can apply the Picard-Lindel\"{o}f Theorem to \eqref{EQ:Full_ODE_System} again, but with the initial condition $ = (\bfU(\delta),\bfE(\delta))$, and thereby extend the unique solution to the interval $[0,2\delta]$.
Again, $(\bfU(2\delta),\bfE(2\delta)) \in\mathcal{D}$, and the process can be continued indefinitely, with, crucially, no degradation of $\delta>0$, since $\mathcal{M}$ and $L$ cannot grow larger and $\beta$ does not shrink.
The result follows.
    \end{proof}

    \section{Long-time Behavior}
    \label{section:asymptoticBehavior}
    
Having proven the existence of unique, global, physically realizable solutions for
\eqref{EQ:Full_ODE_System}, we now seek to characterize the asymptotic behavior of those solutions.
In particular, we want to show that solutions converge to
    \begin{align}
    \label{eq:U_E_infinity_def}
\bfU^\infty(t) = \mathbf{1}\left(\bfu^{\infty}\right)^\top 
    \qquad \text{and} \qquad
\bfE^\infty(t) &= \frac{|\bfu^\infty(t)|^2}{2}P\mathbf{1} + \frac{dT^{\infty}(t)}{2}Q\mathbf{1},
    \end{align}
where
    \begin{equation}
    \label{EQ:T_infinity_Definition}
\bfu^{\infty}(t) = \frac{\sum_i\rho_i\bfu_i(t)}{\sum_i\rho_i} 
    \quad \text{and} \quad
T^{\infty}(t) = \frac{\sum_in_iT_i(t)}{\sum_in_i} + \frac{\sum_i\rho_i(|\bfu_i(t)|^2-|\bfu^{\infty}(t)|^2)}{d\sum_in_i}
    \end{equation}
are constant functions of time.

    \begin{prop}
    \label{lemma:mixtureValuesAreConstant}
Suppose that $(\bfU, \bfE) \in C^1\left([0,\infty);\mathcal{R}\right)$ is a unique global-in-time solution to the system \eqref{EQ:Full_ODE_System} with assumptions as given in \Cref{theorem:existenceAndUniqueness}. 
Further let $\bfT \in C^1\left([0,\infty);\R^N\right)$ be the associated temperature vector, whose components satisfy \eqref{EQ:temperature_simple}.
Then $\bfu^\infty$, $T^\infty$, and $\bfE^\infty$ are time invariant quantities, and $T^\infty >0$.
    \end{prop}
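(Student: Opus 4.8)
The plan is to reduce each of the three quantities to an expression involving only the conserved quantities (total mass density, total number density, total momentum, and total energy) together with the time-constant matrices $P$ and $Q$; once this is done, both the time-invariance and the positivity follow by inspection. I would treat $\bfu^\infty$ first, then $T^\infty$, then $\bfE^\infty$, and finally establish $T^\infty>0$.

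For $\bfu^\infty$, each $\rho_i$ is constant in time by \eqref{eq:ConservationSpeciesMass}, so the denominator $\sum_i\rho_i$ is constant, while the numerator $\sum_i\rho_i\bfu_i$ is the total momentum, which is conserved by the first identity in \eqref{EQ:PDEConservationLaws}. Hence $\bfu^\infty$ is time-invariant. The only nonroutine step is $T^\infty$. Here I would use the energy definition $E_i=\frac{1}{2}\rho_i|\bfu_i|^2+\frac{d}{2}n_iT_i$ from \eqref{eqn:energy-def} to write $\sum_i n_iT_i = \frac{2}{d}\Etot - \frac{1}{d}\sum_i\rho_i|\bfu_i|^2$, where $\Etot=\sum_iE_i$ is conserved by the second identity in \eqref{EQ:PDEConservationLaws}. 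Substituting this into the definition of $T^\infty$, the velocity-dependent sum $\sum_i\rho_i|\bfu_i|^2$ cancels exactly against the matching term in the second summand, leaving
\begin{equation}
T^\infty = \frac{1}{d\sum_i n_i}\left(2\Etot - |\bfu^\infty|^2\sum_i\rho_i\right).
\end{equation}
Every quantity on the right is constant in time (using that $\bfu^\infty$ is already known to be constant), so $T^\infty$ is time-invariant. With $\bfu^\infty$ and $T^\infty$ established as constants and $P,Q$ constant by definition, the formula \eqref{eq:U_E_infinity_def} shows directly that $\bfE^\infty$ is constant as well.

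Finally, for positivity I would work from the displayed formula for $T^\infty$ and write $2\Etot = \sum_i\rho_i|\bfu_i|^2 + d\sum_i n_iT_i$, again from \eqref{eqn:energy-def}, to obtain
\begin{equation}
2\Etot - |\bfu^\infty|^2\sum_i\rho_i = \left(\sum_i\rho_i|\bfu_i|^2 - |\bfu^\infty|^2\sum_i\rho_i\right) + d\sum_i n_iT_i.
\end{equation}
The parenthesized term is nonnegative: setting $w_i=\rho_i/\sum_j\rho_j$, the average $\bfu^\infty=\sum_i w_i\bfu_i$ is a convex combination, so Jensen's inequality applied to $|\cdot|^2$ gives $|\bfu^\infty|^2\le\sum_i w_i|\bfu_i|^2$, which is the claimed inequality after multiplying through by $\sum_j\rho_j$. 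The remaining term $d\sum_i n_iT_i$ is strictly positive, since $n_i>0$ and each $T_i\ge\min_k T_k(0)>0$ by \Cref{theorem:existenceAndUniqueness}. Hence $T^\infty>0$. I do not anticipate a genuine obstacle: the single observation that drives everything is the cancellation collapsing $T^\infty$ onto conserved data, after which both invariance and positivity are immediate.
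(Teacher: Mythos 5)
Your proposal is correct and follows essentially the same route as the paper: reduce $\bfu^\infty$, $T^\infty$, and $\bfE^\infty$ to conserved quantities via \eqref{eq:ConservationSpeciesMass} and \eqref{EQ:PDEConservationLaws}, then prove positivity by showing $\sum_i\rho_i\left(|\bfu_i|^2-|\bfu^\infty|^2\right)\geq 0$. The only cosmetic differences are that you invoke Jensen's inequality where the paper expands the same quantity as $\tfrac{1}{2}\sum_k\rho_k\sum_{i,j}r_ir_j|\bfu_i-\bfu_j|^2$, and that you are slightly more careful in citing the strict lower bound $T_i\geq\min_k T_k(0)>0$ to conclude $T^\infty>0$ rather than merely $T^\infty\geq 0$.
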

    
    \begin{proof}
By conservation of mass, \eqref{eq:ConservationSpeciesMass}, and total momentum, \eqref{EQ:PDEConservationLaws}, $\sum_i\rho_i$ and $\sum_i\rho_i\bfu_i$ are constant.
Thus $\bfu^{\infty}$ is time invariant.
To show that $T^{\infty}$ is constant in time, multiply \eqref{EQ:T_infinity_Definition} by the expression $\frac{d}{2}\sum_in_i$, which is constant by \eqref{eq:ConservationSpeciesMass}, to obtain
    \begin{equation}
    \begin{split}
\frac{d}{2}\left(\sum_in_i\right)T^\infty 
 &= \sum_i\left[\frac{d}{2}n_iT_i+\frac{1}{2}\rho_i|\bfu_i|^2\right] - \frac{1}{2}\left(\sum_i\rho_i\right)|\bfu^\infty|^2 \\
 &= \sum_iE_i - \frac{1}{2}\left(\sum_i\rho_i\right)|\bfu^\infty|^2. 
    \label{EQ:TMixIsConstant}
    \end{split}
    \end{equation}
The first term on the right-hand side of \eqref{EQ:TMixIsConstant} is constant by conservation of total energy, \eqref{EQ:PDEConservationLaws}; the second term is constant by conservation of species mass, \eqref{eq:ConservationSpeciesMass}, and the fact that $\bfu^{\infty}$ is constant.
Thus $T^\infty$ is time invariant.
It follows immediately from \eqref{eq:U_E_infinity_def} that $\bfE^{\infty}$ is also time invariant.

Since $(\bfU(t),\bfE(t)) \in \mathcal{R}$ for all $t>0$, it follows that $T_i(t) = \vartheta(\bfu_i(t),E_i(t)) \geq 0$ for all $t>0$.
To show that $T^\infty>0,$ it is sufficient to show that $\sum_i\rho_i(|\bfu_i|^2-|\bfu^\infty|^2)\geq0$.
Let $r_i = \frac{\rho_i}{\sum_k\rho_k}>0$.
Then $\sum_ir_i=1$ and $|\bfu^{\infty}|^2=(\sum_ir_i\bfu_i)^\top (\sum_jr_j\bfu_j)=\sum_i\sum_jr_ir_j\bfu_i^\top \bfu_j$. 
Therefore
    \begin{equation}
    \begin{split}
\sum_i\rho_i(|\bfu_i|^2-|\bfu^\infty|^2)
  &= \sum_k\rho_k\left[\sum_{i,j} r_ir_j|\bfu_i|^2 
- \sum_{i,j}  r_ir_j\bfu_i^\top \bfu_j\right] \\
  &= \frac{1}{2}\sum_k\rho_k\sum_{i,j} r_ir_j\left|\bfu_i-\bfu_j\right|^2
  \geq 0.
    \end{split}
    \end{equation}
    \end{proof}

    \begin{remark}
The quantities $\bfu^\infty$ and $T^\infty$ were referred to as the mixture mean velocity and mixture temperature, respectively, in \cite{Haack_2023}.
However, in this paper, we reserve these terms for the quantities $\bfu_{i,j}$ and $T_{i,j}$, respectively.
    \end{remark}
 
The convergence of $\bfu_i$, $T_i$, and $\bfE$ as $t \to \infty$ is established by the following result, the proof of which is the focus of the rest of this section.

    \begin{theorem}
    \label{theorem:mainResult}
Under the assumptions of \Cref{lemma:mixtureValuesAreConstant}, 
for all $i \in \{1,\cdots, N\}$,
    \begin{align}
\lim_{t\to\infty}\bfu_i(t) = \bfu^\infty, \qquad \lim_{t\to\infty}T_i(t)=T^\infty,
    \qquad \mbox{and} \qquad
\lim_{t\to\infty}\bfE(t)=\bfE^\infty,
    \end{align}
and bounds for the decay rates of the bulk velocities and energies are given by
    \begin{align}
\left\|\bfu_i(t)-\bfu^\infty\right\|_2 
&\leq C_{\bfU}e^{-\frac{z_{\min}}{\varepsilon}t} , \qquad i \in  \{1, \dots, N\}
    \\
\left\|\bfE(t)-\bfE^\infty\right\|_2 
&\leq C_1 e^{-\frac{\widehat{z}_{\min}}{\varepsilon}t}+C_2\frac{e^{-\frac{z_{\min}}{\varepsilon}t}-e^{-\frac{\widehat{z}_{\min}}{\varepsilon}t}}{\widehat{z}_{\min}-z_{\min}} ,
    \end{align}
where $z_{\min}>0$ and $\widehat{z}_{\min}>0$ are lower bounds on the positive eigenvalues of $Z$ and $\widehat{Z}$, respectively, and $C_{\bfU}>0$, $C_1>0$, and $C_2>0$ are constants depending only on the initial conditions of the system.
    \end{theorem}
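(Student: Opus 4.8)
The plan is to exploit the fact that, although the matrices $Z$ and $\widehat{Z}$ in \eqref{eq:Z_Zhat} depend on time through the temperature-dependent collision frequencies, they retain two crucial structural features for all $t$: they are symmetric positive semidefinite, and their kernels are the \emph{fixed} one-dimensional subspaces $\ker Z = \operatorname{span}(P^{1/2}\mathbf{1})$ and $\ker\widehat{Z} = \operatorname{span}(Q^{1/2}\mathbf{1})$. Symmetry holds because $A$, $B$, $C$ (hence $D-A$, $F-B$) are symmetric, so $D-A$ and $F-B$ are weighted graph Laplacians whose kernels are spanned by $\mathbf{1}$ whenever all $\lambda_{i,j}>0$; conjugating by the constant matrices $P^{-1/2}$, $Q^{-1/2}$ gives the stated kernels. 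First I would record that the a priori bounds $T_i(t)\in[\Tmin,\Tmax]$ (from \Cref{thm:temperatureBoundedBelow} and \Cref{lemma:setBoundInvarianceLemma}) and $\|\bfu_i(t)\|_2\le u_{\max}$ (from \Cref{thm:velocityBounds}) confine each $\lambda_{i,j}^{\textnormal{HS}}$ to a compact subinterval of $(0,\infty)$ uniformly in $t$. Consequently the entries of $D-A$ and $F-B$ are uniformly bounded, and the algebraic connectivity of the associated connected weighted graphs is bounded below, so the smallest \emph{positive} eigenvalues of $Z(t)$ and $\widehat{Z}(t)$ admit uniform positive lower bounds $z_{\min}$ and $\widehat{z}_{\min}$; this is where I would invoke the spectral estimates collected in the Appendix.

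Second, I would prove the velocity decay, which is self-contained. Since \eqref{EQ:WvelocityODESystem1} acts column-by-column and $\ker Z$ is time-independent, the component of $\bfW$ along $P^{1/2}\mathbf{1}$ is conserved --- it encodes $\sum_i\rho_i\bfu_i$ and hence $\bfu^\infty$ --- so $\bfW-\bfW^\infty$ lies in $(\ker Z)^\perp$ for all $t$, where $\bfW^\infty=P^{1/2}\mathbf{1}(\bfu^\infty)^\top$. Differentiating $\|\bfW-\bfW^\infty\|_F^2$ and using the coercivity $\langle\bfv,Z\bfv\rangle\ge z_{\min}\|\bfv\|_2^2$ on $(\ker Z)^\perp$ gives $\frac{\diff}{\diff t}\|\bfW-\bfW^\infty\|_F^2\le -\tfrac{2z_{\min}}{\varepsilon}\|\bfW-\bfW^\infty\|_F^2$, and Gr\"onwall yields $\|\bfW-\bfW^\infty\|_F\le e^{-z_{\min}t/\varepsilon}\|\bfW(0)-\bfW^\infty\|_F$. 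Undoing the scaling $\bfU=P^{-1/2}\bfW$ produces $\|\bfu_i(t)-\bfu^\infty\|_2\le C_{\bfU}e^{-z_{\min}t/\varepsilon}$ with $C_{\bfU}=\rho_{\textnormal{min}}^{-1/2}\|\bfW(0)-\bfW^\infty\|_F$.

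The heart of the argument is the energy estimate, which I would set up by writing $\bfy=\bfxi-\bfxi^\infty$ with $\bfxi^\infty=Q^{-1/2}\bfE^\infty$, so that \eqref{EQ:Scaled_ODE_System} becomes $\frac{\diff\bfy}{\diff t}=-\tfrac{1}{\varepsilon}\widehat{Z}\bfy+\bff$, where $\bff=-\tfrac{1}{\varepsilon}\widehat{Z}\bfxi^\infty+\tfrac{1}{2\varepsilon}Q^{-1/2}(G-C)M\mathbf{1}$. The key computation --- and the step that keeps the argument from becoming circular --- is that the \emph{temperature-dependent} parts of $\widehat{Z}\bfxi^\infty$ and of the source cancel exactly: using $E_i^\infty/n_i-E_j^\infty/n_j=\tfrac12|\bfu^\infty|^2(m_i-m_j)$ one finds $\bff_i=\tfrac{1}{2\varepsilon}n_i^{-1/2}\sum_j B_{i,j}(|\bfu_{i,j}|^2-|\bfu^\infty|^2)(m_i-m_j)$, which involves the bounded, temperature-dependent weights $B_{i,j}(t)$ only through the \emph{velocity} deviations $|\bfu_{i,j}|^2-|\bfu^\infty|^2$. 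Since $\bfu_{i,j}-\bfu^\infty=\alpha_{i,j}(\bfu_i-\bfu^\infty)+\alpha_{j,i}(\bfu_j-\bfu^\infty)$ and the velocities are uniformly bounded, the previous step gives $\|\bff(t)\|_2\le C_f e^{-z_{\min}t/\varepsilon}$ with $C_f$ depending only on the data. I would then verify, using conservation of total energy together with $\sum_i E_i^\infty=\Etot$, that $\bfy\perp\ker\widehat{Z}$ for all $t$ (the antisymmetry of $(m_i-m_j)$ against the symmetric factor $B_{i,j}(|\bfu_{i,j}|^2-|\bfu^\infty|^2)$ shows the source is also orthogonal to $Q^{1/2}\mathbf{1}$). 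On this invariant subspace, coercivity gives the scalar differential inequality $\frac{\diff}{\diff t}\|\bfy\|_2\le-\tfrac{\widehat{z}_{\min}}{\varepsilon}\|\bfy\|_2+C_fe^{-z_{\min}t/\varepsilon}$, whose integrating-factor solution is precisely the Duhamel convolution $\|\bfy(t)\|_2\le\|\bfy(0)\|_2\,e^{-\widehat{z}_{\min}t/\varepsilon}+C_f\int_0^t e^{-\widehat{z}_{\min}(t-s)/\varepsilon}e^{-z_{\min}s/\varepsilon}\,\diff s$. Evaluating the integral (with the $\widehat{z}_{\min}=z_{\min}$ case recovered as a removable-singularity limit) and undoing $\bfE=Q^{1/2}\bfxi$ yields the stated bound with $C_1=n_{\max}^{1/2}\|\bfy(0)\|_2$ and $C_2=n_{\max}^{1/2}C_f\varepsilon$.

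Finally, the limits $\bfu_i\to\bfu^\infty$ and $\bfE\to\bfE^\infty$ are immediate from the two exponential bounds, and $T_i\to T^\infty$ follows by substituting these limits into $T_i=\tfrac{2}{dn_i}E_i-\tfrac{m_i}{d}|\bfu_i|^2$ and simplifying with \eqref{EQ:T_infinity_Definition}. The main obstacle, in my view, is not any single estimate but the time-dependence of $Z$ and $\widehat{Z}$: it rules out a fixed eigenbasis and forces one to replace spectral decomposition by the energy/coercivity method, and it threatens circularity in the energy step because $\widehat{Z}$ depends on the very temperatures one is trying to control. Both difficulties are resolved by structural facts --- the time-independence of the kernels together with the uniform spectral-gap bounds derived from \Cref{section:existenceUniqueness}, and the exact cancellation that leaves $\bff$ driven purely by the independently-controlled velocity relaxation.
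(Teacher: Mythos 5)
Your proposal is correct and follows essentially the same route as the paper: the same null-space/range-space decomposition with respect to the time-dependent but kernel-invariant matrices $Z$ and $\widehat{Z}$, the same coercivity constants $z_{\min}$ and $\widehat{z}_{\min}$ from the appendix eigenvalue bounds (made uniform in time via the temperature lower bound), the same exact cancellation reducing the energy source to $B_{i,j}\bigl(|\bfu_{i,j}|^2-|\bfu^\infty|^2\bigr)(m_i-m_j)$, and the same Duhamel integration yielding the two-exponential bound. The only cosmetic difference is that you work directly with coercivity on $(\ker Z)^\perp$ where the paper expands in an orthonormal eigenbasis of the range space; these are equivalent.
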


    \subsection{Null Spaces of \texorpdfstring{$D-A$}{D-A}, \texorpdfstring{$Z$}{Z}, \texorpdfstring{$F-B$}{F-B}, and \texorpdfstring{$\widehat{Z}$}{ZHat}}

To prove
\Cref{theorem:mainResult}, we first characterize the null spaces of the matrices $D-A$, $Z$, $F-B$, and $\widehat{Z}$, which are defined in \eqref{EQ:matrixDefinitions}, and \eqref{eq:Z_Zhat}.

    \begin{lemma}
    \label{lemma:invariantNullSpacesZ}
The matrices $D-A$ and $Z$ are symmetric positive semi-definite (SPSD), each with a one dimensional null space.
In particular,  $\mathcal{N}(D-A) = {\textnormal{span}}(\{\mathbf{1}\})$  and $\mathcal{N}(Z) = {\textnormal{span}}\left(\left\{P^\frac{1}{2}\mathbf{1}\right\}\right)$.
Moreover, the null space and range space of $Z$ are invariant with respect to time.
    \end{lemma}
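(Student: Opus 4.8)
The plan is to recognize $D-A$ as a weighted graph Laplacian and exploit the associated quadratic-form identity, then transfer every conclusion to $Z$ by a congruence argument.

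First I would note that $A$ is symmetric, since the formula for $A_{i,j}$ in \eqref{eq:ABSC_defs} is invariant under $i\leftrightarrow j$, and that $D$ is the diagonal matrix of row sums of $A$; hence $D-A$ is symmetric. The crux is the identity
\[
\bfx^\top (D-A)\bfx = \tfrac{1}{2}\sum_{i,j} A_{i,j}(x_i - x_j)^2,
\]
obtained by expanding the two terms and using the symmetry of $A$. Because $\rho_i>0$ and $\lambda_{i,j}>0$, every off-diagonal entry $A_{i,j}$ is strictly positive (the diagonal terms drop out since $(x_i-x_i)^2=0$), so the form is nonnegative, which gives SPSD. Moreover, the form vanishes precisely when $x_i=x_j$ for all $i,j$, that is, when $\bfx\in\textnormal{span}(\{\mathbf{1}\})$; this identifies $\mathcal{N}(D-A)=\textnormal{span}(\{\mathbf{1}\})$ as one-dimensional.

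For $Z=P^{-\frac{1}{2}}(D-A)P^{-\frac{1}{2}}$, I would use that $P^{-\frac{1}{2}}$ is symmetric and invertible. Symmetry and positive semidefiniteness then follow by congruence, since $\bfy^\top Z\bfy=(P^{-\frac{1}{2}}\bfy)^\top(D-A)(P^{-\frac{1}{2}}\bfy)\geq 0$, and the null space maps accordingly: $Z\bfy=0$ if and only if $P^{-\frac{1}{2}}\bfy\in\textnormal{span}(\{\mathbf{1}\})$, i.e. $\bfy\in\textnormal{span}(\{P^{\frac{1}{2}}\mathbf{1}\})$, which is again one-dimensional.

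For the time-invariance claim, the key observation is that the characterization $\mathcal{N}(Z(t))=\textnormal{span}(\{P^{\frac{1}{2}}\mathbf{1}\})$ involves only the fixed matrix $P=\textnormal{diag}\{\rho_k\}$, which is constant in time by conservation of mass \eqref{eq:ConservationSpeciesMass}; hence $\mathcal{N}(Z(t))$ is a single fixed subspace, and because $Z(t)$ is symmetric its range is the fixed orthogonal complement $\mathcal{N}(Z(t))^\perp$. The one step that genuinely invokes the earlier analysis — and is the main obstacle — is ensuring the null space remains \emph{exactly} one-dimensional for every $t$: this requires $A_{i,j}(t)>0$ for each off-diagonal pair, equivalently $\lambda_{i,j}(t)>0$, which is guaranteed for all time by the strictly positive temperature lower bound established in \Cref{thm:temperatureBoundedBelow}.
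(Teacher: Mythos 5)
Your proof is correct and follows essentially the same route as the paper: the weighted-Laplacian quadratic form identity for $D-A$, the congruence argument for $Z$, and time-invariance of the null space via constancy of $P$. Your closing remark that strict positivity of $A_{i,j}(t)$ for all $t$ ultimately rests on the temperature lower bound is a nice touch the paper leaves implicit, but it does not change the argument.
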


    \begin{proof}
Clearly $D-A$ and $Z$ are symmetric by inspection. 
For any $\bfy\in\R^N,$ 
    \begin{equation}
\bfy^\top (D-A)\bfy = \frac{1}{2}\sum_{i,j}A_{i,j}(y_i-y_j)^2\geq0.
    \end{equation}
Moreover, since $A_{i,j}>0$, $\frac{1}{2}\sum_{i,j}A_{i,j}(y_i-y_j)^2=0$ if and only if $y_i=y_j$ for all $i$ and $j$, in which case $\bfy=c\mathbf{1}$ for some $c \in \R$. 
Thus $D-A$ is SPSD, with a one-dimensional null space spanned by the eigenvector $\mathbf{1}$.
Similarly, $\bfy^\top Z\bfy=(P^{-\frac{1}{2}}\bfy)^\top (D-A)(P^{-\frac{1}{2}}\bfy)\geq0$, and $Z\bfy=\mathbf{0} \iff P^{-\frac{1}{2}}\bfy=c\mathbf{1}\iff\bfy=cP^\frac{1}{2}\mathbf{1}$, for some $c \in \R$.
Thus $Z$ is SPSD, with a one dimensional null space spanned by the eigenvector $P^\frac{1}{2}\mathbf{1}$.    
By conservation of mass, \eqref{eq:ConservationSpeciesMass}, $P^\frac{1}{2}\mathbf{1}$ is indepedent of time. 
Thus, the null space of $Z$ is invariant with respect to time. 
The symmetry of $Z$ implies that $\mathcal{R}(Z)=\mathcal{N}(Z)^\perp$ is also invariant.
    \end{proof}

    \begin{lemma}
    \label{lemma:invariantNullSpacesZHat}
The matrices $F-B$ and $\widehat{Z}$ are symmetric positive semi-definite (SPSD), with $\mathcal{N}(F-B) = {\textnormal{span}}\left(\left\{\mathbf{1}\right\} \right)$ and $\mathcal{N}(\widehat{Z}) = {\textnormal{span}}\left( \left\{Q^\frac{1}{2}\mathbf{1}\right\}\right)$.
Moreover, the null space and range space of $\widehat{Z}$ are invariant with respect to time.
    \end{lemma}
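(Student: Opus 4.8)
The plan is to follow the proof of \Cref{lemma:invariantNullSpacesZ} essentially verbatim, since $F-B$ and $\widehat{Z}$ are the energy-moment analogs of $D-A$ and $Z$, with the number-density matrix $Q$ playing the role of the mass-density matrix $P$. First I would observe that $B$ is symmetric: the defining entry $B_{i,j} = n_i n_j \lambda_{i,j}\lambda_{j,i}/(n_i\lambda_{i,j}+n_j\lambda_{j,i})$ is manifestly invariant under the swap $i \leftrightarrow j$, and $F$ is diagonal by construction, so $F-B$ is symmetric. The congruence $\widehat{Z} = Q^{-\frac{1}{2}}(F-B)Q^{-\frac{1}{2}}$ then inherits symmetry, since $Q^{-\frac{1}{2}}$ is symmetric.

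Next, because $F = \textnormal{diag}\{\sum_k B_{i,k}\}$ collects the row sums of $B$, the matrix $F-B$ is precisely the graph Laplacian associated with the symmetric, nonnegative edge weights $B_{i,j}$. I would therefore record the standard Laplacian identity
\begin{equation}
\bfy^\top (F-B)\bfy = \frac{1}{2}\sum_{i,j} B_{i,j}(y_i-y_j)^2 \geq 0,
\end{equation}
which shows $F-B$ is SPSD. Since $n_i$, $n_j$, $\lambda_{i,j}$, and $\lambda_{j,i}$ are all strictly positive, every weight $B_{i,j}$ is strictly positive, so this quadratic form vanishes exactly when $y_i = y_j$ for all pairs, i.e. when $\bfy \in \textnormal{span}(\{\mathbf{1}\})$. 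This identifies $\mathcal{N}(F-B) = \textnormal{span}(\{\mathbf{1}\})$.

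For $\widehat{Z}$ I would use the congruence directly: $\bfy^\top \widehat{Z}\bfy = (Q^{-\frac{1}{2}}\bfy)^\top (F-B)(Q^{-\frac{1}{2}}\bfy) \geq 0$ gives SPSD, and since $Q^{-\frac{1}{2}}$ is invertible, $\widehat{Z}\bfy = \mathbf{0}$ holds iff $Q^{-\frac{1}{2}}\bfy \in \mathcal{N}(F-B) = \textnormal{span}(\{\mathbf{1}\})$, i.e. iff $\bfy \in \textnormal{span}(\{Q^{\frac{1}{2}}\mathbf{1}\})$. Finally, time invariance is immediate: by conservation of number density, \eqref{eq:ConservationSpeciesMass}, the entries $n_i$ of $Q$ are constant, so the null-space generator $Q^{\frac{1}{2}}\mathbf{1}$ does not move in time, and symmetry of $\widehat{Z}$ gives $\mathcal{R}(\widehat{Z}) = \mathcal{N}(\widehat{Z})^\perp$, which is therefore also time invariant. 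There is no genuine obstacle here, as the argument is a mechanical transcription of the preceding lemma; the only points requiring a moment of care are verifying the $i \leftrightarrow j$ symmetry and the strict positivity of the weights $B_{i,j}$, which then feed the Laplacian argument exactly as before.
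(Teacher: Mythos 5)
Your proof is correct and takes exactly the approach the paper intends: the paper's own proof of this lemma simply states that it ``follows closely that of \Cref{lemma:invariantNullSpacesZ}'' and leaves the details to the reader, and your argument is precisely that transcription, with the Laplacian quadratic-form identity for $F-B$, the congruence via $Q^{-\frac{1}{2}}$, and time invariance from conservation of the number densities. No gaps.
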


    \begin{proof}
The proof follows closely that of \Cref{lemma:invariantNullSpacesZ}.
Details are left to the reader.
    \end{proof}
    
    \subsection{Velocity Relaxation Proof}
    \label{section:velocityRelaxationProof}

    \begin{theorem}
    \label{theorem:velocityRelaxationProof}
Under the assumptions of \Cref{lemma:mixtureValuesAreConstant}, $\lim_{t\to\infty}\bfu_i(t) =\bfu^\infty$
and
    \begin{align}
    \label{eq:u_decay_bound}
\left\|\bfu_i(t)-\bfu^\infty\right\|_2 &\leq C_{\bfU}e^{-\frac{z_{\min}}{\varepsilon}t}, \quad i = 1,\cdots , N,
    \end{align}
where $C_{\bfU}>0$ is a constant that depends on the initial conditions of the system and $z_{\min}>0$ is a lower bound on the positive eigenvalues of $Z$.
    \end{theorem}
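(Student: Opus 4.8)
The plan is to work with the scaled velocity variable $\bfW = P^{\frac{1}{2}}\bfU$, which by \eqref{EQ:WvelocityODESystem1} satisfies $\frac{\diff \bfW}{\diff t} = -\frac{1}{\varepsilon}Z\bfW$, and to exploit the spectral structure of $Z$ established in \Cref{lemma:invariantNullSpacesZ}. Since $Z$ is SPSD with the time-invariant one-dimensional null space $\mathcal{N}(Z) = \textnormal{span}(\{P^{\frac{1}{2}}\mathbf{1}\})$ and time-invariant range $\mathcal{R}(Z) = \mathcal{N}(Z)^\perp$, I would let $\Pi_0$ and $\Pi_\perp = I - \Pi_0$ denote the orthogonal projections onto $\mathcal{N}(Z)$ and $\mathcal{R}(Z)$. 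Crucially, these projections are \emph{constant} in time even though $Z = Z(t)$ varies, so the splitting $\bfW = \bfW_0 + \bfW_\perp$, with $\bfW_0 = \Pi_0\bfW$ and $\bfW_\perp = \Pi_\perp\bfW$ (acting column-by-column on $\bfW \in \R^{N\times d}$), is compatible with the dynamics.

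First I would handle the null-space component. Because $Z$ is symmetric, $Z\bfW$ always lies in $\mathcal{R}(Z) = \mathcal{N}(Z)^\perp$, so $\Pi_0 Z = 0$ and hence $\frac{\diff}{\diff t}\bfW_0 = -\frac{1}{\varepsilon}\Pi_0 Z\bfW = \mathbf{0}$; thus $\bfW_0$ is conserved. Computing the projection explicitly from $\Pi_0 = (P^{\frac{1}{2}}\mathbf{1})(P^{\frac{1}{2}}\mathbf{1})^\top / (\mathbf{1}^\top P\mathbf{1})$ and using $\mathbf{1}^\top P\bfU = \sum_i \rho_i \bfu_i^\top$ gives $\bfW_0 = P^{\frac{1}{2}}\mathbf{1}(\bfu^\infty)^\top$, so that $P^{-\frac{1}{2}}\bfW_0 = \mathbf{1}(\bfu^\infty)^\top = \bfU^\infty$. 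Consequently $\bfU - \bfU^\infty = P^{-\frac{1}{2}}\bfW_\perp$, which reduces the theorem to showing that $\bfW_\perp$ decays exponentially.

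Next I would estimate the range-space component. Since $Z\bfW_0 = \mathbf{0}$ and $\mathcal{R}(Z)$ is invariant, $\bfW_\perp$ satisfies $\frac{\diff}{\diff t}\bfW_\perp = -\frac{1}{\varepsilon}Z\bfW_\perp$. Differentiating the squared Frobenius norm and using that each column of $\bfW_\perp$ lies in $\mathcal{R}(Z)$, on which $\bfw^\top Z\bfw \geq z_{\min}\|\bfw\|_2^2$, yields $\frac{\diff}{\diff t}\|\bfW_\perp\|_F^2 \leq -\frac{2z_{\min}}{\varepsilon}\|\bfW_\perp\|_F^2$. A Gr\"onwall argument then gives $\|\bfW_\perp(t)\|_F \leq \|\bfW_\perp(0)\|_F\, e^{-z_{\min}t/\varepsilon}$. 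Translating back through $\|\bfu_i - \bfu^\infty\|_2 = \rho_i^{-\frac{1}{2}}\|(\bfW_\perp)_{i,\cdot}\|_2 \leq \rho_{\textnormal{min}}^{-\frac{1}{2}}\|\bfW_\perp\|_F$ produces the stated bound \eqref{eq:u_decay_bound} with $C_{\bfU} = \rho_{\textnormal{min}}^{-\frac{1}{2}}\|\bfW_\perp(0)\|_F$, and the decay forces $\bfu_i(t) \to \bfu^\infty$.

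The main obstacle is to justify a single uniform constant $z_{\min} > 0$ serving as a lower bound for the smallest positive eigenvalue of the time-dependent matrix $Z(t)$ over all $t$. Here the earlier results are essential: by \Cref{lemma:setBoundInvarianceLemma} the temperatures remain in the compact interval $[\Tmin, \Tmax]$, so through \eqref{eqn:collision-freqs-hhm} the collision frequencies $\lambda_{i,j}$, and hence the entries $A_{i,j}$, remain bounded and strictly positive. Since $A_{i,j} > 0$ throughout, \Cref{lemma:invariantNullSpacesZ} guarantees that $Z(t)$ has constant rank $N-1$, so its smallest positive eigenvalue is a continuous, strictly positive function of the temperature vector on the compact admissible set and therefore attains a positive minimum, which I take to be $z_{\min}$. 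This is the only place where the non-Lipschitz temperature dependence of the frequencies genuinely interacts with the decay estimate, and controlling it is exactly what the positivity result of \Cref{thm:temperatureBoundedBelow} buys us.
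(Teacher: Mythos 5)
Your proposal is correct and follows essentially the same route as the paper: both pass to $\bfW=P^{\frac12}\bfU$, split off the time-invariant null-space component of $Z$ (your conserved $\Pi_0\bfW = P^{\frac12}\mathbf{1}(\bfu^\infty)^\top$ is exactly the paper's statement that $\widetilde{\bfW}_{\mathcal N}=\mathbf{0}$ for $\widetilde{\bfW}=\bfW-\bfW^\infty$), and run a Gr\"onwall estimate on the Frobenius norm of the range-space part using the lower bound $z_{\min}$ on the positive eigenvalues. The only cosmetic difference is that you justify the uniform $z_{\min}$ by a compactness/continuity argument, whereas the paper's appendix gives the explicit bound $z_{\min}=N A_{\min}/\max\{\rho_k\}$ directly from the quadratic form; both are valid.
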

    
    \begin{proof}
The proof is based on \eqref{EQ:WvelocityODESystem1} which is equivalent to \eqref{EQ:velocityODESystem2}. 
Let   $\bfW^\infty=P^\frac{1}{2}\bfU^\infty$, where $\bfU^\infty = \mathbf{1} \left(\bfu^\infty\right)^\top$ is given in \eqref{eq:U_E_infinity_def}.
Then \Cref{lemma:invariantNullSpacesZ} implies that $Z\bfW^\infty=\mathbf{0}\in\R^{N\times d}$.
Moreover, since $P$ and $\bfU^\infty$ are independent of time, so too is $\bfW^\infty$.
Thus in terms of $\widetilde{\bfW}\coloneqq \bfW-\bfW^\infty$, \eqref{EQ:WvelocityODESystem1} takes the form
    \begin{align}
\frac{\diff\widetilde{\bfW}}{\diff t} &= -\frac{1}{\varepsilon}Z\widetilde{\bfW}.
    \label{EQ:W_tilde_ODE}
    \end{align}
Write the element $\widetilde{\bfW}$ as a sum of the null and range space components with respect to $Z$: $\widetilde{\bfW}=\widetilde{\bfW}_\mathcal{R}+\widetilde{\bfW}_\mathcal{N}$.
As shown in \Cref{appendix:extraProofs}, $\widetilde{\bfW}_\mathcal{N}=0$.
Hence $\widetilde{\bfW} = V \bfA$, where 
the columns of $V=[\bfv_1,\cdots,\bfv_{N-1}]\in\R^{N\times(N-1)}$ are a set of orthonormal eigenvectors of $Z$ that span $\mathcal{R}(Z)$, and $\bfA = [\bfa_1,\cdots,\bfa_{d}]\in\R^{(N-1)\times d}$.
Using this formulation, the ODE system \eqref{EQ:W_tilde_ODE} becomes
    \begin{equation}
    \label{eq:A-ODE}
\frac{\diff\bfA}{\diff t} = -\frac{1}{\varepsilon}V^\top ZV\bfA.
    \end{equation}
The eigenvalues of $V^\top ZV$ are strictly positive and can be bounded below by $z_{\min} \coloneqq \frac{N\min_{i,j}A_{i,j}}{\max\{\rho_k\}}$
(see \Cref{appendix:eigenvalueBounds}).
Thus $\left(\bfA,V^\top ZV\bfA\right)_{\text{F}}\geq z_{\min}\left\|\bfA\right\|_{\text{F}}^2$, where $\| \cdot \|_{\text{F}}$ is the Frobenius norm.
Let  $s(t)=\left\|\bfA\right\|_{\text{F}}^2$.
Then, using \eqref{eq:A-ODE},
    \begin{align}
\frac{1}{2}\frac{\diff s}{\diff t}
 =\left(\bfA,\frac{\diff\bfA}{\diff t}\right)_{\text{F}}
 = \left(\bfA,-\frac{1}{\varepsilon}V^\top ZV\bfA\right)_{\text{F}}
 \leq -\frac{z_{\min}}{\varepsilon}\left\|\bfA\right\|_{\text{F}}^2
 = -\frac{z_{\min}}{\varepsilon} s, 
    \end{align}
which implies that 
    \begin{equation}
    \label{EQ:A_inequality}
\left\|\bfA(t) \right\|_{\text{F}}^2 \leq \left\|\bfA(0)\right\|_{\text{F}}^2e^{-\frac{2z_{\min}}{\varepsilon}t}. 
    \end{equation}
Since $V^\top V=I_{N-1}$, it follows that $\|\widetilde{\bfW}\|_{\text{F}}^2 = \left\|V\bfA\right\|_{\text{F}}^2 = \left\|\bfA\right\|_{\text{F}}^2.$
Thus, \eqref{EQ:A_inequality} becomes
    \begin{align}
\left\|\bfW(t) -\bfW^\infty\right\|_{\text{F}}^2 &\leq \left\|\bfW(0) -\bfW^\infty\right\|_{\text{F}}^2e^{-\frac{2z_{\min}}{\varepsilon}t}. 
    \label{EQ:W_tilde_exp_bound}
    \end{align}
Since $\bfW = P^{\frac12} \bfU$, it follows that
    \begin{equation}
    \begin{split}
\left\|\bfW-\bfW^\infty\right\|_{\text{F}}^2 
  &= \sum_{k=1}^d\sum_{j=1}^N \left[\rho_j\left((\bfu_j)_k-\bfu^\infty_k\right)^2\right]
  \\
  &= \sum_{j=1}^N\rho_j\left\|\bfu_j-\bfu^\infty\right\|_2^2 \geq \min\{\rho_k\}\left\|\bfu_i-\bfu^\infty\right\|_2^2,
    \end{split}
    \end{equation}
for any $i\in\{1,\cdots,N\}$.
Thus, \eqref{EQ:W_tilde_exp_bound} gives the decay bound
    \begin{align}
\left\|\bfu_i(t)-\bfu^\infty\right\|_2^2
  &\leq \frac{1}{\min\{\rho_k\}}\left\|\bfW(t)-\bfW^\infty\right\|_{\text{F}}^2
  \leq C_{\bfU}^2e^{-\frac{2z_{\min}}{\varepsilon}t},
    \end{align}
where
$  C_{\bfU} = \frac{1}{\min\{\rho_k\}^\frac{1}{2}}\left\|\bfW^0-\bfW^\infty\right\|_{\text{F}}.  $
Thus, $\lim_{t\to\infty}\bfu_i(t)=\bfu^\infty,$ as desired.
    \end{proof}

    \subsection{Energy Relaxation Proof}
    \label{section:energyRelaxationProof}

    \begin{theorem}
    \label{theorem:energyRelaxation}
With the same assumptions as in \Cref{lemma:mixtureValuesAreConstant},   
$\lim_{t\to\infty}\bfE(t) = \bfE^\infty$ 
and 
    \begin{align}
    \label{eq:energy_decay}
\left\|\bfE(t)-\bfE^\infty\right\|_2 
&\leq C_1 e^{-\frac{\widehat{z}_{\min}}{\varepsilon}t}+C_2\frac{e^{-\frac{z_{\min}}{\varepsilon}t}-e^{-\frac{\widehat{z}_{\min}}{\varepsilon}t}}{\widehat{z}_{\min}-z_{\min}},
    \end{align}
where $C_1 > 0$ and $C_2>0$ are constants that depend on the initial conditions of the system \eqref{EQ:Full_ODE_System} and $\widehat{z}_{\min}$ is a lower bound on the positive eigenvalues of $\widehat{Z}$.
    \end{theorem}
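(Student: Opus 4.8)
The plan is to pass to the scaled variable $\bfxi = Q^{-\frac12}\bfE$ and track the decay of $\widetilde{\bfxi} \coloneqq \bfxi - \bfxi^\infty$, where $\bfxi^\infty \coloneqq Q^{-\frac12}\bfE^\infty$. Since $\bfxi^\infty$ is constant in time by \Cref{lemma:mixtureValuesAreConstant}, subtracting it from the $\bfxi$-equation in \eqref{EQ:Scaled_ODE_System} gives a linear inhomogeneous ODE $\frac{\diff\widetilde{\bfxi}}{\diff t} = -\frac{1}{\varepsilon}\widehat{Z}\widetilde{\bfxi} + \bfr(t)$ with source
\begin{equation}
\bfr(t) = -\frac{1}{\varepsilon}\widehat{Z}\bfxi^\infty + \frac{1}{2\varepsilon}Q^{-\frac12}(G-C)M\mathbf{1}.
\end{equation}
The strategy is then: (i) rewrite $\bfr$ in a form that exposes its decay; (ii) show $\|\bfr(t)\|_2$ decays at the \emph{velocity} rate $e^{-z_{\min}t/\varepsilon}$; and (iii) close a Grönwall-type differential inequality for $\|\widetilde{\bfxi}\|_2$ driven by $\widehat{Z}$, producing exactly the two-exponential bound in \eqref{eq:energy_decay}.

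For step (i) I would compute $\widehat{Z}\bfxi^\infty$ explicitly. Using $Q^{-1}P = M$ together with the expression for $\bfE^\infty$ in \eqref{eq:U_E_infinity_def} gives $Q^{-1}\bfE^\infty = \frac{|\bfu^\infty|^2}{2}M\mathbf{1} + \frac{dT^\infty}{2}\mathbf{1}$; since $\mathcal{N}(F-B) = \operatorname{span}(\{\mathbf{1}\})$ by \Cref{lemma:invariantNullSpacesZHat} (so $(F-B)\mathbf{1}=\mathbf{0}$ at every time), the isotropic part drops out and $\widehat{Z}\bfxi^\infty = \frac{|\bfu^\infty|^2}{2}Q^{-\frac12}(F-B)M\mathbf{1}$. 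Substituting this yields the clean form
\begin{equation}
\bfr(t) = \frac{1}{2\varepsilon}Q^{-\frac12}\big[(G-C) - |\bfu^\infty|^2(F-B)\big]M\mathbf{1},
\end{equation}
whose $i$-th entry equals $\frac{1}{2\varepsilon\sqrt{n_i}}\sum_j B_{i,j}\big(|\bfu_{i,j}|^2-|\bfu^\infty|^2\big)(m_i-m_j)$.

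The main obstacle is step (ii), because $\widehat{Z}$, $B$, $G$, and $C$ all depend on time through the temperature-dependent collision frequencies, and the temperature decay rate is precisely what the theorem seeks to establish: a potential circularity. The resolution, made transparent by the entry-wise formula above, is that the \emph{only} genuinely decaying factor is $|\bfu_{i,j}|^2 - |\bfu^\infty|^2$, whereas the temperature dependence resides entirely in the coefficients $B_{i,j}$, which are \emph{bounded} uniformly in time by \Cref{lemma:setBoundInvarianceLemma} (temperatures lie in $[\Tmin,\Tmax]$). Writing $\bfu_{i,j}-\bfu^\infty = \alpha_{i,j}(\bfu_i-\bfu^\infty)+\alpha_{j,i}(\bfu_j-\bfu^\infty)$ as a convex combination (since $\alpha_{i,j}+\alpha_{j,i}=1$) and factoring $|\bfu_{i,j}|^2-|\bfu^\infty|^2 = \langle \bfu_{i,j}-\bfu^\infty,\,\bfu_{i,j}+\bfu^\infty\rangle$, I would bound $|\bfu_{i,j}-\bfu^\infty| \le \max_k\|\bfu_k-\bfu^\infty\|_2 \le C_{\bfU}e^{-z_{\min}t/\varepsilon}$ via \Cref{theorem:velocityRelaxationProof} and $|\bfu_{i,j}+\bfu^\infty|\le 2u_{\max}$ via \Cref{thm:velocityBounds}. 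This gives $\|\bfr(t)\|_2 \le \frac{K}{\varepsilon}e^{-z_{\min}t/\varepsilon}$ for a constant $K$ depending only on the data, breaking the circularity.

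For step (iii) I would first observe that $\widetilde{\bfxi}$ carries no component in $\mathcal{N}(\widehat{Z}) = \operatorname{span}(\{Q^{\frac12}\mathbf{1}\})$: the inner product $\langle Q^{\frac12}\mathbf{1},\widetilde{\bfxi}\rangle = \mathbf{1}^\top(\bfE-\bfE^\infty) = \sum_i(E_i-E_i^\infty)$ vanishes initially (both states carry total energy $\Etot$, as in \eqref{EQ:TMixIsConstant}), and its time derivative equals $\langle Q^{\frac12}\mathbf{1},\bfr\rangle = \frac{1}{2\varepsilon}\mathbf{1}^\top[(G-C)-|\bfu^\infty|^2(F-B)]M\mathbf{1}=0$ by the symmetry of $B$ and $C$ against the antisymmetry of $m_i-m_j$. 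Hence $\widetilde{\bfxi}(t)\in\mathcal{R}(\widehat{Z})$ for all $t$, where $\widehat{Z}$ acts with every eigenvalue at least $\widehat{z}_{\min}>0$ (a uniform-in-time bound from the temperature lower bound, analogous to that for $z_{\min}$ in \Cref{appendix:eigenvalueBounds}). Setting $\sigma(t)=\|\widetilde{\bfxi}(t)\|_2$ and differentiating gives $\frac{\diff}{\diff t}\big(e^{\widehat{z}_{\min}t/\varepsilon}\sigma\big)\le e^{\widehat{z}_{\min}t/\varepsilon}\|\bfr(t)\|_2$, hence the Duhamel estimate $\sigma(t)\le \sigma(0)e^{-\widehat{z}_{\min}t/\varepsilon} + \int_0^t e^{-\widehat{z}_{\min}(t-\tau)/\varepsilon}\|\bfr(\tau)\|_2\,\diff\tau$. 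Inserting the bound from step (ii) and evaluating the elementary integral produces precisely $C_1 e^{-\widehat{z}_{\min}t/\varepsilon} + C_2\frac{e^{-z_{\min}t/\varepsilon}-e^{-\widehat{z}_{\min}t/\varepsilon}}{\widehat{z}_{\min}-z_{\min}}$. Finally, $\|\bfE-\bfE^\infty\|_2 = \|Q^{\frac12}\widetilde{\bfxi}\|_2 \le \sqrt{\max_k n_k}\,\sigma(t)$ transfers the estimate to $\bfE$, and $T_i = \frac{2}{dn_i}E_i - \frac{m_i}{d}|\bfu_i|^2 \to T^\infty$ follows from the established limits of $\bfE$ and $\bfu_i$. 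When $\widehat{z}_{\min}=z_{\min}$, the quotient is read as its limit $\frac{t}{\varepsilon}e^{-z_{\min}t/\varepsilon}$.
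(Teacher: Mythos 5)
Your proposal is correct and follows essentially the same route as the paper: pass to $\widetilde{\bfxi}=Q^{-\frac12}(\bfE-\bfE^\infty)$, use $(F-B)\mathbf{1}=\mathbf{0}$ to reduce the source to $\frac{1}{2\varepsilon}Q^{-\frac12}[(G-C)-|\bfu^\infty|^2(F-B)]M\mathbf{1}$, bound it by $O(e^{-z_{\min}t/\varepsilon})$ via the velocity decay and the convex-combination identity for $\bfu_{i,j}$, kill the null-space component of $\widetilde{\bfxi}$ by total-energy conservation, and close with a Duhamel/Gr\"onwall estimate on the range of $\widehat{Z}$. The only (harmless) cosmetic differences are that the paper projects onto an orthonormal basis of $\mathcal{R}(\widehat{Z})$ and works with $\|\bfb\|_2$ rather than $\|\widetilde{\bfxi}\|_2$ directly, and it obtains $\widetilde{\bfxi}^\top Q^{\frac12}\mathbf{1}=0$ at every time directly from conservation rather than via your initial-value-plus-derivative argument.
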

    
    \begin{proof}
The proof is based on \eqref{EQ:Scaled_ODE_System}, which is equivalent to \eqref{EQ:Full_ODE_System}.
Set $\bfxi^\infty\coloneqq Q^{-\frac{1}{2}}\bfE^\infty$, and note that
    \begin{align}
\widehat{Z}\bfxi^\infty=\frac{|\bfu^\infty|^2}{2}Q^{-\frac{1}{2}}(F-B)M\mathbf{1}.
    \end{align}
Since $Q$ and $\bfE^\infty$ are time invariant, so too is $\bfxi^\infty$.
Hence $\widetilde{\bfxi}\coloneqq\bfxi-\bfxi^\infty$ satisfies the ODE
    \begin{align}
    \label{eq:xi_bar_odeSystem}
\frac{\diff\widetilde{\bfxi}}{\diff t} &= -\frac{1}{\varepsilon}\widehat{Z}\widetilde{\bfxi} + \frac{1}{2\varepsilon}Q^{-\frac{1}{2}}\left[(G-C)-|\bfu^\infty|^2(F-B)\right]M\mathbf{1}.
    \end{align}
Consider the null and range space components of $\widetilde{\bfxi}$ with respect to $\widehat{Z}$: $\widetilde{\bfxi}=\widetilde{\bfxi}_\mathcal{R}+\widetilde{\bfxi}_\mathcal{N}$.
It can be shown that $\widetilde{\bfxi}_\mathcal{N}\equiv\mathbf{0}$.(See \Cref{appendix:extraProofs}.) 
Therefore 
$ \widetilde{\bfxi}
    =\widetilde{\bfxi}_\mathcal{R}
    = \widehat{V} \bfb, $
where
$\widehat{V}=\left[\widehat{\bfv}_1,\cdots,\widehat{\bfv}_{N-1}\right]\in\R^{N\times(N-1)}$
is a matrix whose columns form an orthonormal basis for the range space of $\widehat{Z}$, and $\bfb = \widehat{V}^\top \widetilde{\bfxi} \in\R^{N-1}$.
Multiplication of \eqref{eq:xi_bar_odeSystem} by $\bfb^\top \widehat{V}^\top$ gives
    \begin{equation}
    \label{EQ:b-sqr-ODE}
\frac{1}{2}\frac{\diff}{\diff t}\left\|\bfb\right\|_2^2 
= -\frac{1}{\varepsilon}\bfb^\top \widehat{V}^\top \widehat{Z}\widehat{V}\bfb + \frac{1}{2\varepsilon}\bfb^\top \widehat{V}^\top Q^{-\frac{1}{2}}\left[(G-C)-|\bfu^\infty|^2(F-B)\right]M\mathbf{1}
.
    \end{equation}
The eigenvalues of $\widehat{V}^\top \widehat{Z}\widehat{V}$ are strictly positive and can be bounded below by
$  \widehat{z}_{\min} \coloneqq \frac{N\min_{i,j}B_{i,j}}{\max\{n_k\}} $
(see \Cref{appendix:eigenvalueBounds}),
which gives the bound $\bfb^\top \widehat{V}^\top \widehat{Z}\widehat{V}\bfb\geq\widehat{z}_{\min}\|\bfb\|_2^2$ for any $\bfb\in\R^{N-1}$.
With this bound and an application of the Cauchy-Schwarz inequality, \eqref{EQ:b-sqr-ODE} becomes
    \begin{equation}
\frac{\diff}{\diff t}\left\|\bfb\right\|_2^2
  \leq -\frac{2\widehat{z}_{\min}}{\varepsilon}\left\|\bfb\right\|_2^2
  + \frac{1}{\varepsilon}\frac{\|\bfb\|_2}{\min\{n_k\}^\frac{1}{2}}\left\|\left[(G-C)-|\bfu^\infty|^2(F-B)\right]M\mathbf{1}\right\|_2 .
    \label{EQ:normBode}
    \end{equation}
The next step is to bound the source term in \eqref{EQ:normBode}.
Standard norm inequalities and the definitions in \eqref{EQ:matrixDefinitions}, and \Cref{appendix:eigenvalueBounds}, give
    \begin{equation}
\left\|\left[\left(G-C\right)-|\bfu^\infty|^2\left(F-B\right)\right]M\mathbf{1}\right\|_2 
  \leq 2\max\{m_k\}\sum_{i=1}^N\sum\limits_{\substack{j=1 \\ j\neq i}}^NB_{\max}\left||\bfu_{i,j}|^2-|\bfu^\infty|^2\right|.
    \label{EQ:ineq1}
    \end{equation}
Meanwhile, to bound $\left||\bfu_{i,j}|^2-|\bfu^\infty|^2\right|$, we use (i) the decay bound in \eqref{eq:u_decay_bound}, (ii) the definitions of $\bfu_{i,j}$ and $\bfu^\infty$, (iii) the triangle inequality and Cauchy Schwarz inequality, (iv) the fact that $\alpha_{i,j}+\alpha_{j,i}=1$, and (v) the fact that $|\bfu_{i,j}| \leq \max_{i,j} \{ |\bfu_{i}|, |\bfu_{j}| \} \leq u_{\max}$ to obtain
    \begin{multline}
\left||\bfu_{i,j}|^2-|\bfu^\infty|^2\right|   
  = \left|(\bfu_{i,j}+\bfu^\infty)^\top (\bfu_{i,j}-\bfu^\infty) \right|
  \leq 2 u_{\max}\left\|\bfu_{i,j}-\bfu^\infty \right\|_2
    \\
  \leq 2 u_{\max} \left( \alpha_{i,j} \| \bfu_{i}-\bfu^\infty \|_2
  + \alpha_{j,i} \| \bfu_{j}-\bfu^\infty \|_2 \right) 
  \leq 2C_{\bfU}u_{\max}e^{-\frac{z_{\min}}{\varepsilon}t}.
    \label{EQ:ineq3}
    \end{multline}
Together \eqref{EQ:ineq1}
and \eqref{EQ:ineq3} give
    \begin{align}
\left\|\left[(G-C)-|\bfu^\infty|^2(F-B)\right]M\mathbf{1}\right\|_2 
&\leq 4N(N-1) C_{\bfU}u_{\max}B_{\max}\max\{m_k\}e^{-\frac{z_{\min}}{\varepsilon}t},
    \end{align}
and \eqref{EQ:normBode} becomes, after dividing through by $\left\|\bfb\right\|_2$,
    \begin{align}
\frac{\diff}{\diff t}\left\|\bfb\right\|_2 &\leq -\frac{\widehat{z}_{\min}}{\varepsilon}\left\|\bfb\right\|_2 +  \frac{1}{\varepsilon}C_0e^{-\frac{z_{\min}}{\varepsilon}t},
    \label{EQ:bODE_inequality}
    \end{align}
where $C_0 = \frac{1}{2}\frac{4N(N-1)C_{\bfU}u_{\max}B_{\max}\max\{m_k\}}{\min\{n_k\}^\frac{1}{2}}$, or equivalently, 
    \begin{align}
\frac{\diff}{\diff t}\left(\|\bfb\|_2e^{\frac{\widehat{z}_{\min}}{\varepsilon}t}\right) &\leq \frac{C_0}{\varepsilon}e^{\frac{\widehat{z}_{\min}-z_{\min}}{\varepsilon}t}.
    \end{align}
Integrating both sides above in $t$ gives
    \begin{align}
    \|\bfb\|_2 
     &\leq 
\|\bfb^0\|_2e^{-\frac{\widehat{z}_{\min}}{\varepsilon}t} + \frac{C_0}{\varepsilon}\frac{\varepsilon}{\widehat{z}_{\min}-z_{\min}}\left(e^{-\frac{z_{\min}}{\varepsilon}t}-e^{-\frac{\widehat{z}_{\min}}{\varepsilon}t}\right).
    \label{EQ:b_ineq_new}
    \end{align}
Since $\widehat{V}^\top \widehat{V}=I_{N-1}$, it follows that $\|\widetilde{\bfxi}\|_2^2 = \|\widehat{V}\bfb\|_2^2 = \left\|\bfb\right\|_2^2.$ 
Thus, \eqref{EQ:b_ineq_new} becomes
    \begin{align}
\left\|\bfxi-\bfxi^\infty\right\|_2 
\leq \left\|\bfxi^0-\bfxi^\infty\right\|_2e^{-\frac{\widehat{z}_{\min}}{\varepsilon}t} + C_0\frac{e^{-\frac{z_{\min}}{\varepsilon}t}-e^{-\frac{\widehat{z}_{\min}}{\varepsilon}t}}{\widehat{z}_{\min}-z_{\min}}.
    \end{align}
Further, since
    \begin{align}
\left\|\bfxi-\bfxi^\infty\right\|_2^2 &= \|Q^{-\frac{1}{2}}(\bfE-\bfE^\infty)\|_2^2 = \sum_{i=1}^Nn_i^{-1}|E_i-E^\infty_i|^2 \geq \frac{1}{\max\{n_k\}}\left\|\bfE-\bfE^\infty\right\|_2^2,
    \end{align}
    it follows that
    \begin{align}
\left\|\bfE(t)-\bfE^\infty\right\|_2 
&\leq \max\{n_k\}^\frac{1}{2}\left(\left\|\bfxi^0-\bfxi^\infty\right\|_2e^{-\frac{\widehat{z}_{\min}}{\varepsilon}t} + C_0 \frac{e^{-\frac{z_{\min}}{\varepsilon}t}-e^{-\frac{\widehat{z}_{\min}}{\varepsilon}t}}{\widehat{z}_{\min}-z_{\min}}\right),
    \label{EQ:EnergyDecayRates}
    \end{align}
which recovers \eqref{eq:energy_decay} with $C_1\coloneqq\max\{n_k\}^\frac{1}{2}\left\|\bfxi^0-\bfxi^\infty\right\|_2$ and $C_2\coloneqq C_0\max\{n_k\}^\frac{1}{2}$.
Thus $\lim_{t\to\infty}\bfE(t) = \bfE^\infty$, as desired.
    \end{proof}

    \subsection{Temperature Relaxation}
    \label{section:temperatureRelaxationProof}

    \begin{corollary}
    \label{corollary:temperatureRelaxation}
With the same assumptions as in \Cref{lemma:mixtureValuesAreConstant}, for each species $i\in\{1,\cdots,N\}$, $\lim_{t\to\infty}T_i(t)=T^\infty,$ and
    \begin{equation}
    \begin{split}
    \label{eq:temperature_decay_estimate}
\left|T_i-T^\infty\right| 
  &\leq \frac{2}{d\min\{n_k\}}\left[C_1 e^{-\frac{\widehat{z}_{\min}}{\varepsilon}t}+C_2\frac{e^{-\frac{z_{\min}}{\varepsilon}t}-e^{-\frac{\widehat{z}_{\min}}{\varepsilon}t}}{\widehat{z}_{\min}-z_{\min}}\right]
    \\
  & \qquad + \frac{\max\{m_k\}}{d}2u_{\max}C_{\bfU}e^{-\frac{z_{\min}}{\varepsilon}t}
  .
    \end{split}
    \end{equation}
    \end{corollary}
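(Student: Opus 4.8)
The plan is to treat this as a genuine corollary: since the temperature is an explicit algebraic function of the velocity and energy via $\vartheta_i$ in \eqref{eq:vartheta_defn}, the decay of $T_i$ to $T^\infty$ should follow directly from the exponential decay of $\bfu_i$ (\Cref{theorem:velocityRelaxationProof}) and of $\bfE$ (\Cref{theorem:energyRelaxation}). The one preliminary observation that makes everything fit together is the identity $T^\infty = \vartheta_i(\bfu^\infty, E_i^\infty)$ for every $i$. To see this, I would read off from the definition of $\bfE^\infty$ in \eqref{eq:U_E_infinity_def} that its $i$th component is $E_i^\infty = \tfrac{1}{2}\rho_i |\bfu^\infty|^2 + \tfrac{d}{2} n_i T^\infty$; substituting this into the formula \eqref{eq:vartheta_defn} for $\vartheta_i$ and using $\rho_i = m_i n_i$ collapses the expression to exactly $T^\infty$.

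With this identity in hand, I would write $T_i - T^\infty = \vartheta_i(\bfu_i, E_i) - \vartheta_i(\bfu^\infty, E_i^\infty)$ and expand using the linear-plus-quadratic form of $\vartheta_i$ to obtain $T_i - T^\infty = \tfrac{2}{d n_i}(E_i - E_i^\infty) - \tfrac{m_i}{d}(|\bfu_i|^2 - |\bfu^\infty|^2)$. The triangle inequality then splits the estimate into an energy contribution and a velocity contribution, which I would bound separately. For the energy term, I would use $|E_i - E_i^\infty| \leq \|\bfE - \bfE^\infty\|_2$ together with $n_i \geq \min\{n_k\}$, and then insert the energy decay bound \eqref{eq:energy_decay} from \Cref{theorem:energyRelaxation}; this reproduces the first bracketed term in \eqref{eq:temperature_decay_estimate}. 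For the velocity term, I would factor the difference of squares as $|\bfu_i|^2 - |\bfu^\infty|^2 = (\bfu_i + \bfu^\infty)^\top(\bfu_i - \bfu^\infty)$, bound the first factor by $\|\bfu_i + \bfu^\infty\|_2 \leq 2 u_{\max}$—noting that $\bfu^\infty$, being a mass-weighted convex combination of the $\bfu_j$, inherits the bound $\|\bfu^\infty\|_2 \leq u_{\max}$ from \eqref{eq:u_max}—and bound the second factor by the decay estimate \eqref{eq:u_decay_bound} of \Cref{theorem:velocityRelaxationProof}; together with $m_i \leq \max\{m_k\}$ this yields the second term in \eqref{eq:temperature_decay_estimate}. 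Since both the energy and velocity contributions tend to zero as $t \to \infty$, the convergence $T_i(t) \to T^\infty$ follows immediately.

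There is no deep obstacle here; the result is essentially bookkeeping built on the two preceding decay theorems. The only step requiring genuine care is the algebraic identity $T^\infty = \vartheta_i(\bfu^\infty, E_i^\infty)$, which hinges on correctly extracting the per-species components of $\bfE^\infty$ from \eqref{eq:U_E_infinity_def} and cancelling the $\tfrac{1}{2}\rho_i|\bfu^\infty|^2$ terms; getting the mixed velocity-energy structure of $\vartheta_i$ right is what guarantees that the final bound separates cleanly into the energy-driven and velocity-driven pieces displayed in \eqref{eq:temperature_decay_estimate}, with the stated constants $\tfrac{2}{d\min\{n_k\}}$ and $\tfrac{\max\{m_k\}}{d}$ appearing naturally.
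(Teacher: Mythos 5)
Your proposal is correct and follows essentially the same route as the paper: both express $T_i$ and $T^\infty$ via $\vartheta_i$, split $|T_i - T^\infty|$ by the triangle inequality into an energy piece bounded by \eqref{eq:energy_decay} and a difference-of-squares velocity piece bounded by $2u_{\max}$ times \eqref{eq:u_decay_bound}. Your explicit verification that $T^\infty = \vartheta_i(\bfu^\infty, E_i^\infty)$ and that $\|\bfu^\infty\|_2 \leq u_{\max}$ (as a convex combination) merely spells out steps the paper leaves implicit.
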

    
    \begin{proof}
 For each $i$,
    \begin{align}
T_i = \frac{2}{dn_i}E_i-\frac{m_i}{d}|\bfu_i|^2
    \qquad \text{and} \qquad
T^\infty &= \frac{2}{dn_i}E^\infty_i - \frac{m_i}{d}|\bfu^\infty|^2.
    \end{align}
Therefore 
    \begin{equation}
    \begin{split}
\left|T_i-T^\infty\right| 
  &\leq \frac{2}{d\min\{n_k\}}\left|E_i-E^\infty_i\right| + \frac{\max\{m_k\}}{d}\left||\bfu^\infty|^2-|\bfu_i|^2\right|. 
    \label{EQ:TempDiffs}         
    \end{split}
    \end{equation}
The velocity decay bound \eqref{eq:u_decay_bound},
gives
    \begin{equation}
    \begin{split}
\left||\bfu^\infty|^2-|\bfu_i|^2\right| 
  \leq 2 u_{\max} \|  \bfu^\infty - \bfu_i\|_2
  \leq 2u_{\max}C_{\bfU}e^{-\frac{z_{\min}}{\varepsilon}t}.     
    \end{split}
    \end{equation}
Using this and the energy decay bound \eqref{eq:energy_decay}, the result follows.
    \end{proof}
    
    \begin{remark}
The result in \Cref{corollary:temperatureRelaxation} can also be verified directly, using the formulation of the temperature equation \eqref{EQ:temperature_simple} and an approach similar to the proof of \Cref{theorem:energyRelaxation}.
The difference in the proofs comes from the fact that the nullspace component analogous to $\widetilde{\bfxi}_\mathcal{N}$ is no longer zero when using \eqref{EQ:temperature_simple}.
    \end{remark}

    \section{Numerical Demonstration}
    \label{section:numericalDemonstration}

In this section, we compute temperature and velocity profiles using a fully implicit (backward) Euler time stepping scheme.\footnote{The Backward Euler method preserves the monotonicity properties of the temperature and bulk velocity established in \Cref{section:existenceUniqueness}.
This fact will be will proved in a separate paper.}  
We assume a slab geometry in the $x_1$ direction.
Thus, while $d=3$, the quantities of interest depend only on $x_1$.
Moreover the  bulk velocities in the $x_2$ and $x_3$ direction are identically zero, i.e., $U_{i,j} = 0$ for all $j \in \{2,3\}$.

We consider systems with three different gas species chosen from a collection of four possible elements:  
Helium (He), Argon (Ar), Krypton (Kr), and Xenon (Xe).
The masses and diameters of these elements are taken from \cite{book:MathTheory_NonUniformGases_Chapman_Cowling} and are given below in SI units: 
    \begin{align}
  \begin{pmatrix}
    m_{\text{He}} \\ m_{\text{Ar}} \\ m_{\text{Kr}} \\ m_{\text{Xe}}
  \end{pmatrix}
=
  \begin{pmatrix}
    6.6464731 \\ 66.335209 \\ 139.14984 \\ 218.01714
  \end{pmatrix}
    \times 10^{-27} \text{ kg},
    \qquad
  \begin{pmatrix}
    d_{\text{He}} \\ d_{\text{Ar}} \\ d_{\text{Kr}} \\ d_{\text{Xe}}
  \end{pmatrix}
=
  \begin{pmatrix}
    2.193 \\ 3.659 \\ 4.199 \\ 4.939
  \end{pmatrix} \times 10^{-10} \text{ m}
    .
\end{align}
In all of the examples below, the monotonocity of the mininum temperature (Theorem \ref{thm:temperatureBoundedBelow}) and the upper and lower bounds on the bulk velocities (Theorem \ref{thm:velocityBounds}) are respected.

    \paragraph{Example 1: Temperature Decay, Ar-Kr-Xe Mixture}

The purpose of this example is to demonstrate temperature relaxation when the  bulk velocities are zero.  The initial number densities, velocities, and temperatures are given by
    \begin{align}
  \begin{pmatrix}
    n_{\text{Ar}}^0 \\ n_{\text{Kr}}^0 \\ n_{\text{Xe}}^0
  \end{pmatrix}
=
  \begin{pmatrix}
    1 \\ 1 \\ 1
  \end{pmatrix}
    \times 10^{28}\textnormal{ m}^{-3},
    \quad
\bfU^0
=
  \begin{pmatrix}
    0 & 0 & 0 \\
    0 & 0 & 0 \\
    0 & 0 & 0
  \end{pmatrix}
    \textnormal{ m/s}
    ,
    \quad
\bfT^0
=
  \begin{pmatrix}
    1000 \\ 2000 \\ 3000
  \end{pmatrix}
    \textnormal{ K}.
    \end{align}
\Cref{figure:tempDecay} shows the relaxation of the temperatures to their steady state value.
In this case the decay rate established in \eqref{eq:temperature_decay_estimate} is fairly sharp, even though the temperatures may not approach steady state monotonically.
\begin{figure}[ht]
    \centering
    \begin{subfigure}[t]{0.48\textwidth}
        \centering
        \includegraphics[width=\textwidth]{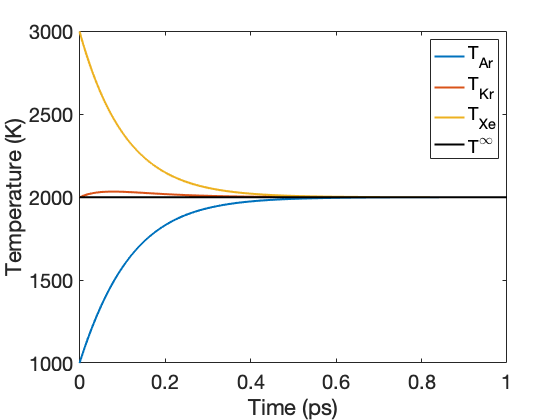}
        \caption{Species and steady state temperatures.}
    \end{subfigure}
    \hfill
    \begin{subfigure}[t]{0.48\textwidth}  
        \centering 
        \includegraphics[width=\textwidth]{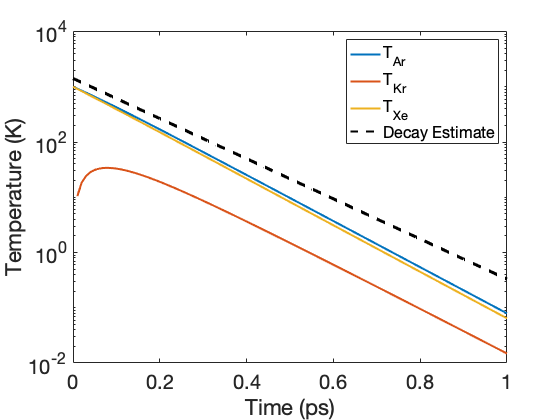}
        \caption{Deviation from steady state temperature and analytical decay estimate from \eqref{eq:temperature_decay_estimate}.}
    \end{subfigure}
    \caption{Temperature Decay of Ar-Kr-Xe Mixture in Example 1.}
    \label{figure:tempDecay}
\end{figure}

    \paragraph{Example 2: Velocity Decay,  Ar-Kr-Xe Mixture}
    
The purpose of this example is to demonstrate the velocity relaxation, with minimal effects from the temperature.
A positive velocity is given to one particle type, and each temperature is set to the same constant. 
The initial number densities, velocities, and temperatures are given by
\begin{align}
  \begin{pmatrix}
    n_{\text{Ar}}^0 \\ n_{\text{Kr}}^0 \\ n_{\text{Xe}}^0
  \end{pmatrix}
=
  \begin{pmatrix}
    3 \\ 2 \\ 1
  \end{pmatrix}
    \times 10^{28}\textnormal{ m}^{-3}
    ,
    \;\;\;
\bfU^0
=
  \begin{pmatrix}
    100 & 0 & 0 \\
    0 & 0 & 0 \\
    0 & 0 & 0
  \end{pmatrix}
    \textnormal{ m/s}
    ,
    \;\;\;
\bfT^0
=
  \begin{pmatrix}
    1000 \\ 1000 \\ 1000
  \end{pmatrix}
    \textnormal{ K}.
    \end{align}
\Cref{figure:VelDecay100} shows the relaxation of the velocities and temperatures to the steady state values.
The decay rates established for the velocities and temperatures in \eqref{eq:u_decay_bound} and \eqref{eq:temperature_decay_estimate} underestimate the true decay rates in the example.
The results also demonstrate that, unlike the two-species case, particle velocities may not converge to the steady-state value monotonically. 
\begin{figure}[ht]
    \centering
    \begin{subfigure}[t]{0.48\textwidth}
        \centering \includegraphics[width=\textwidth]{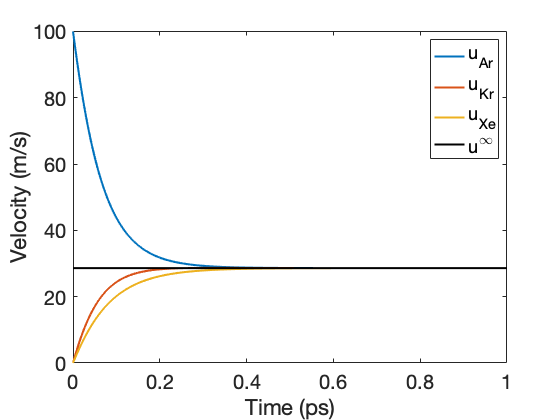}
        \caption{Species and steady state velocities.}
    \end{subfigure}
    \hfill
    \begin{subfigure}[t]{0.48\textwidth}  
        \centering         \includegraphics[width=\textwidth]{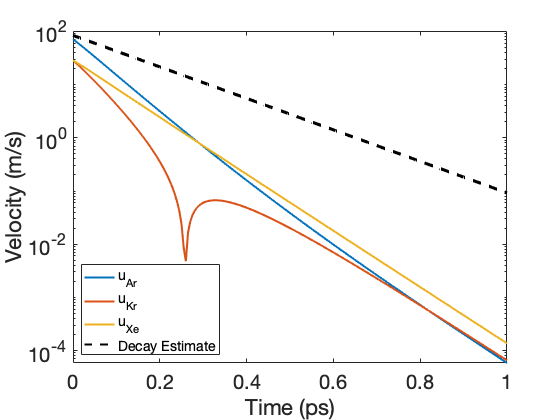}
        \caption{Deviation from steady state velocity and analytical decay estimate from \eqref{eq:u_decay_bound}.}
    \end{subfigure}
        \vskip\parskip
    \begin{subfigure}[t]{0.48\textwidth}   
        \centering 
        \includegraphics[width=\textwidth]{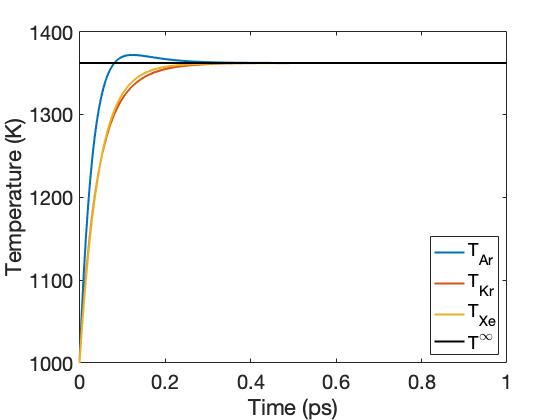}
        \caption{Species and steady state temperatures.}
    \end{subfigure}
    \hfill
    \begin{subfigure}[t]{0.48\textwidth}   
        \centering
        \includegraphics[width=\textwidth]{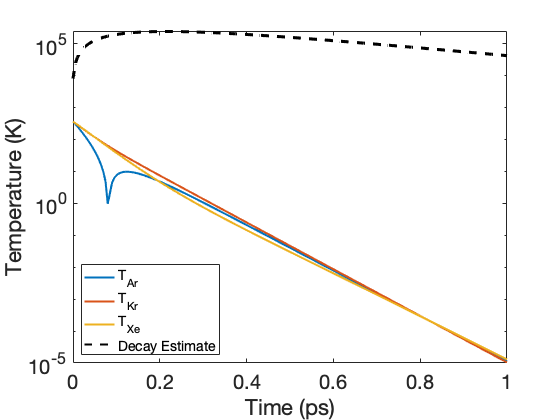}
        \caption{Deviation from steady state velocity and analytical decay estimate from \eqref{eq:temperature_decay_estimate}.}
    \end{subfigure}
    \caption{Velocity and Temperature Decay of Ar-Kr-Xe Mixture in Example 2.}
    \label{figure:VelDecay100}
\end{figure}

    \paragraph{Example 3: Velocity-Temperature Relaxation, He-Kr-Xe Mixture}
    
This test case exercises the model due to the large kinetic energy differences between the species.
The initial number densities, velocities, and temperatures are given by
    \begin{align}
  \begin{pmatrix}
    n_{\text{He}}^0 \\ n_{\text{Kr}}^0 \\ n_{\text{Xe}}^0
  \end{pmatrix}
=
  \begin{pmatrix}
    0.01 \\ 1 \\ 1
  \end{pmatrix}
    \times 10^{28} \textnormal{ m}^{-3}
    ,\quad
\bfU^{0}
=
  \begin{pmatrix}
    864.8 & 0 & 0 \\
    0 & 0 & 0 \\
    0 & 0 & 0
  \end{pmatrix}
    \textnormal{ m/s}
    ,\quad
\bfT^0
=
  \begin{pmatrix}
    3000 \\ 300 \\ 300
  \end{pmatrix}
    \textnormal{ K}
    .
    \end{align}
The momentum equilibration has a corresponding kinetic energy redistribution
which causes non-monotonic changes in the temperature profile, as seen in 
\Cref{figure:JeffProblem_temperature}.
Because of the large differences in the particle masses and number densities, the estimates for the decay rates established in \eqref{eq:u_decay_bound} and \eqref{eq:temperature_decay_estimate} are very weak.
\begin{figure}[ht]
    \centering
    \begin{subfigure}[t]{0.48\textwidth}
        \centering
        \includegraphics[width=\textwidth]{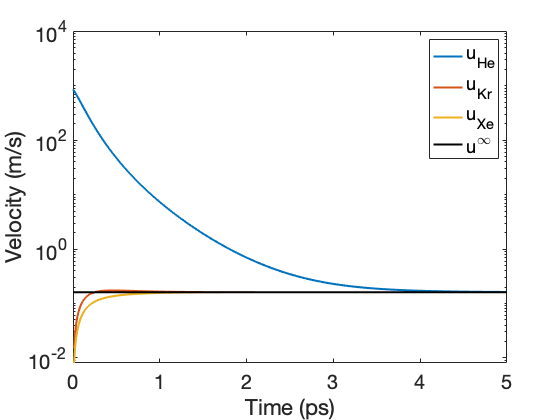}
        \caption{Species and steady state velocities.}
    \end{subfigure}
    \hfill
    \begin{subfigure}[t]{0.48\textwidth}  
        \centering 
        \includegraphics[width=\textwidth]{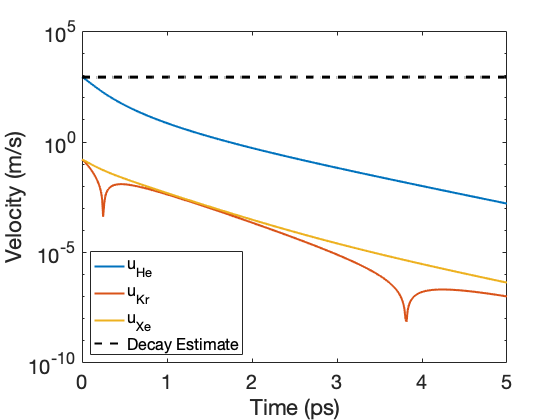}
        \caption{Deviation from steady state velocity and analytical decay estimate from \eqref{eq:u_decay_bound}.}
    \end{subfigure}
        \vskip\parskip
    \begin{subfigure}[t]{0.48\textwidth}   
        \centering 
        \includegraphics[width=\textwidth]{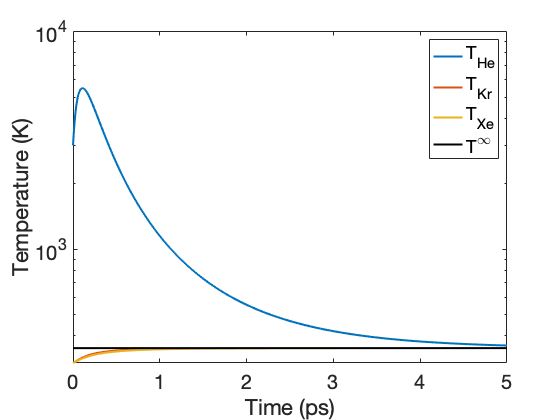}
        \caption{Species and steady state temperatures.}    
        \label{figure:JeffProblem_temperature}
    \end{subfigure}
    \hfill
    \begin{subfigure}[t]{0.48\textwidth}   
        \centering 
        \includegraphics[width=\textwidth]{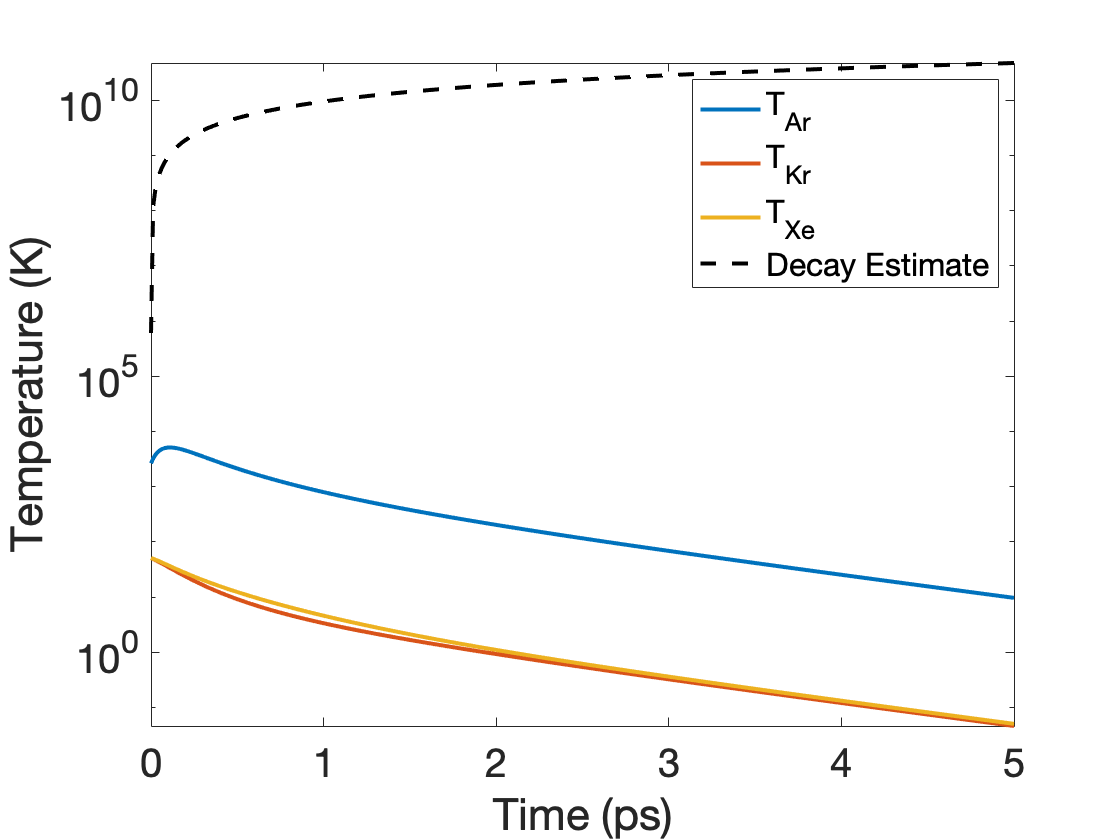}
        \caption{Deviation from steady state velocity and analytical decay estimate from \eqref{eq:temperature_decay_estimate}.}
    \end{subfigure}
    \caption{Velocity and Temperature Decay of He-Kr-Xe Mixture in Example 3.}
    \label{figure:JeffProblem}
\end{figure}
    
    \section{Conclusions}
    \label{sec:conclusions}

In this paper, we have studied the moment equations associated to a recently developed multi-species BGK kinetic model for rarefied gas dynamics in the spatially homogeneous setting.
The model includes collision frequencies that, unlike most models in the literature, depend on the species temperatures.
This fact complicates the analysis of the moment equations and subsequent numerical tools.
We have proven that all species temperatures are bounded below by a positive, non-decreasing  temperature envelope, thus establishing that species temperatures always remain positive.
We have also established upper and lower bounds on the components of the bulk velocity that are in terms of the initial data.

Using the lower bound on the species temperatures, we have shown that the moments always stay within a bounded, time-invariant set of physically realizable states, which, in turn, leads to the existence of global, unique solutions.
Finally, we have proved that unique equilibria exist for the moment equations, and solutions converge to these equilbria exponentially in time.
In addition, we have established bounds on the convergence rates to equilibria.
We concluded the paper with some basic numerical simulations that demonstrate some of the established theoretical behavior.

The numerical results in this paper are computed with a backward Euler method, using an iterative scheme to solve the relevant non-linear algebraic equations at each time-step.
In future work, we will describe the scheme more fully and prove convergence under a suitable time step restriction that does not require resolution of the parameter $\varepsilon$.
We will use the backward Euler method as  a component in an IMEX scheme for simulating multi-species BGK models that include phase-space advection.

    \appendix

    \section{Picard-Lindel\"{o}f Theorem}

We will make use of the following version of the Picard-Lindel\"{o}f Theorem for autonomous systems:

    \begin{theorem}
    \label{thm:picard-lindelof}
Suppose that $\bfu_0\in\R^d$ and $\bff\in C(\overline{B}(\bfu_0,\beta);\R^d)$, where $B(\bfu_0,\beta)$ ($\overline{B}(\bfu_0,\beta)$) denotes the open (closed) Euclidean ball of radius $\beta$ centered at $\bfu_0$.
Suppose further that $\|\bff(\bfv)\|_2 \le M$, for all $\bfv\in \overline{B}(\bfu_0,\beta)$, and $\bff$ is Lipschitz continuous, with constant $L>0$, on $\overline{B}(\bfu_0,\beta)$.
Set $\delta = \min\left(\frac{1}{2L},\frac{\beta}{M} \right)$.
Then the IVP $\bfu'(t) = \bff(\bfu(t))$, $\bfu(t_0) = \bfu_0$ has a unique solution on the interval $[t_0-\delta,t_0+\delta]$.
    \end{theorem}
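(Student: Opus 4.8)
The plan is to recast the initial value problem as a fixed-point equation and apply the Banach contraction mapping principle, which is the standard route to Picard--Lindel\"of. First I would observe that, by the fundamental theorem of calculus, a continuous function $\bfu$ taking values in $\overline{B}(\bfu_0,\beta)$ solves the IVP on $[t_0-\delta,t_0+\delta]$ if and only if it satisfies the integral equation
\begin{equation*}
\bfu(t) = \bfu_0 + \int_{t_0}^t \bff(\bfu(s))\,\diff s.
\end{equation*}
This reformulation is the crux of the argument: it converts the differential problem into a fixed-point problem amenable to a completeness/contraction argument, and it handles existence and uniqueness simultaneously.

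Next I would set up the function space. Let $X$ denote the set of continuous maps $\bfu\colon[t_0-\delta,t_0+\delta]\to\R^d$ satisfying $\sup_t\|\bfu(t)-\bfu_0\|_2\le\beta$, equipped with the supremum norm. Since $X$ is a closed subset of the Banach space of continuous $\R^d$-valued functions on a compact interval, it is itself a complete metric space. I would then define the operator $\Phi$ by $(\Phi\bfu)(t)=\bfu_0+\int_{t_0}^t\bff(\bfu(s))\,\diff s$ and show that its fixed points are exactly the solutions sought.

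The two technical steps are to verify that $\Phi$ maps $X$ into itself and that it is a contraction; these are precisely the two conditions that force the choice $\delta=\min(1/(2L),\beta/M)$. For the self-mapping property, the uniform bound $\|\bff\|_2\le M$ gives $\|(\Phi\bfu)(t)-\bfu_0\|_2\le M|t-t_0|\le M\delta\le\beta$, using $\delta\le\beta/M$. For the contraction property, the Lipschitz hypothesis gives
\begin{equation*}
\|(\Phi\bfu)(t)-(\Phi\bfv)(t)\|_2 \le L|t-t_0|\,\sup_s\|\bfu(s)-\bfv(s)\|_2 \le L\delta\,\|\bfu-\bfv\|_\infty \le \tfrac{1}{2}\|\bfu-\bfv\|_\infty,
\end{equation*}
using $\delta\le 1/(2L)$, so that $\Phi$ is a contraction with constant $1/2$ on $X$.

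Finally, the Banach fixed-point theorem supplies a unique $\bfu\in X$ with $\Phi\bfu=\bfu$, that is, a unique solution of the integral equation and hence of the IVP. A short regularity remark closes the argument: since $\bfu$ is continuous and $\bff$ is continuous, the integrand $s\mapsto\bff(\bfu(s))$ is continuous, so the fixed point is in fact $C^1$, and differentiating the integral equation recovers $\bfu'=\bff(\bfu)$ with $\bfu(t_0)=\bfu_0$. I do not anticipate a genuine obstacle here, as this is a classical result; the only point requiring care is confirming that the prescribed $\delta$ simultaneously enforces invariance of the ball and a contraction constant strictly below one, which is exactly why $\delta$ is taken as the stated minimum.
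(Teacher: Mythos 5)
Your proof is correct. The paper states this theorem in its appendix without proof, citing it as a standard result to be invoked in the existence/uniqueness argument of Theorem~\ref{theorem:existenceAndUniqueness}, so there is no in-paper argument to compare against; your contraction-mapping proof is the classical one, and you correctly identify how the two pieces of $\delta = \min\left(\frac{1}{2L},\frac{\beta}{M}\right)$ separately enforce invariance of the ball ($\delta \le \beta/M$) and a contraction constant of $\frac{1}{2}$ ($\delta \le \frac{1}{2L}$), which is exactly the role this particular $\delta$ plays when the theorem is applied iteratively in the paper to extend the local solution globally without degradation of the step size. The only point worth making explicit is that uniqueness holds among all solutions of the IVP, not merely those in your space $X$: since $\bff$ is only defined on $\overline{B}(\bfu_0,\beta)$, any solution necessarily takes values there, so every solution is a fixed point of $\Phi$ in $X$ and the Banach fixed-point theorem does give full uniqueness.
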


    \section{Bounding the Eigenvalues of \texorpdfstring{$Z$}{Z} and \texorpdfstring{$\widehat{Z}$}{Zhat}}
    \label{appendix:eigenvalueBounds}
    
The matrices $Z$ and $\widehat{Z}$, defined in \eqref{eq:Z_Zhat}, play a key role in the dynamics of \eqref{EQ:Full_ODE_System} and the equivalent system \eqref{EQ:Scaled_ODE_System}.
In this section, we prove bounds on the eigenvalues of these matrices.
With the collision frequencies defined in \eqref{eqn:collision-freqs-hhm}, the matrices $A$ and $B$, defined in \eqref{eq:ABSC_defs}, take the form
    \begin{align}
A_{i,j} &= \frac{16}{3}\sqrt{\frac{\pi}{2}}\frac{m_im_j(d_i+d_j)^2}{(m_i+m_j)^3}\rho_i\rho_j\sqrt{\frac{T_i}{m_i}+\frac{T_j}{m_j}},
    \\
B_{i,j} &= \frac{8}{3}\sqrt{\frac{\pi}{2}}\frac{(d_i+d_j)^2}{(m_i+m_j)^2}\rho_i\rho_j\sqrt{\frac{T_i}{m_i}+\frac{T_j}{m_j}}.
    \end{align}
Since each value used to define the matrix elements is positive and bounded above and below (importantly, the temperature), then their values can be bounded as follows:
    \begin{align}
A_{\min}(t)
    \coloneqq
\min_{i,j}A_{i,j}(t)
    \leq
A_{i,j}(t)
    \leq
\max_{i,j} A_{i,j}(t)
    \eqqcolon
A_{\max}(t),
    \\
B_{\min}(t)
    \coloneqq
\min_{i,j}B_{i,j}(t)
    \leq
B_{i,j}(t)
    \leq
\max_{i,j}B_{i,j}(t)
    \eqqcolon
B_{\max}(t).
    \end{align}

    \begin{theorem}[Eigenvalue Bounds]
Let $A,B,D,F \in \R^{N \times N}$ be the matrices defined in \eqref{EQ:matrixDefinitions}, and let 
$A_{\min} = \min_{i,j} A_{i,j}$, $B_{\min} = \min_{i,j} B_{i,j}$, $A_{\max} = \max_{i,j} A_{i,j}$, and $B_{\max} = \max_{i,j} B_{i,j}$.
Then the eigenvalues of the matrices $Z=P^{-\frac{1}{2}}(D-A)P^{-\frac{1}{2}}$ and $\widehat{Z}=Q^{-\frac{1}{2}}(F-B)Q^{-\frac{1}{2}}$ satisfy
    \begin{align}
0 &=z_0<z_{\min}\leq z_1 \leq \cdots \leq z_{N-1} \leq z_{\max},
    \label{EQ:inequality1}
    \\
0 &=\widehat{z}_0<\widehat{z}_{\min}\leq \widehat{z}_1 \leq \cdots \leq \widehat{z}_{N-1} \leq \widehat{z}_{\max},
    \label{EQ:inequality2}
    \end{align}
where
    \begin{align}
z_{\min} = \frac{A_{\min}N}{\max\{\rho_k\}}
    &,\quad
z_{\max} = \frac{A_{\max}(N-1)}{\min\{\rho_k\}},\quad 
    \\
\widehat{z}_{\min} = \frac{B_{\min}N}{\max\{n_k\}}
    &,\quad
\widehat{z}_{\max} = \frac{B_{\max}(N-1)}{\min\{n_k\}}.
    \end{align}
    \end{theorem}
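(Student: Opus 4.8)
The plan is to treat each eigenvalue of $Z$ as a generalized Rayleigh quotient of the matrix pencil $(D-A,P)$. Since $Z=P^{-\frac12}(D-A)P^{-\frac12}$, the substitution $\bfz=P^{-\frac12}\bfy$ converts the eigenvalue equation $Z\bfy=z\bfy$ into $(D-A)\bfz=zP\bfz$, so the eigenvalues of $Z$ are exactly the stationary values of
$$
\mathcal{Q}(\bfz)=\frac{\bfz^\top(D-A)\bfz}{\bfz^\top P\bfz}=\frac{\tfrac12\sum_{i,j}A_{i,j}(z_i-z_j)^2}{\sum_i\rho_i z_i^2},
$$
where the numerator identity is the one already recorded in the proof of \Cref{lemma:invariantNullSpacesZ}. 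That lemma shows $D-A$ is SPSD with $\mathcal{N}(D-A)=\textnormal{span}(\{\mathbf{1}\})$, so $z_0=0$ is attained at $\bfz=\mathbf{1}$, the remaining $N-1$ eigenvalues are strictly positive, and these positive eigenvalues are the stationary values of $\mathcal{Q}$ restricted to the $P$-orthogonal complement of $\mathbf{1}$, namely $\{\bfz:\sum_i\rho_i z_i=0\}$. This accounts for the ordering and strict positivity asserted in \eqref{EQ:inequality1}, and it remains to bound $\mathcal{Q}$ above and below on that subspace.

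For the upper bound I would estimate the numerator using $A_{i,j}\le A_{\max}$ together with the elementary identity $\tfrac12\sum_{i,j}(z_i-z_j)^2=N\|\bfz\|_2^2-(\sum_i z_i)^2\le N\|\bfz\|_2^2$, and the denominator using $\sum_i\rho_i z_i^2\ge \min\{\rho_k\}\,\|\bfz\|_2^2$; dividing then produces the asserted upper bound $z_{\max}$ and hence $z_{N-1}\le z_{\max}$. This part is routine.

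The lower bound is the substantive step. On $\{\bfz:\sum_i\rho_i z_i=0\}$ I would use $A_{i,j}\ge A_{\min}$ to obtain $\bfz^\top(D-A)\bfz\ge A_{\min}\big(N\|\bfz\|_2^2-(\sum_i z_i)^2\big)$, so that the claimed bound $\mathcal{Q}(\bfz)\ge \tfrac{N A_{\min}}{\max\{\rho_k\}}$ reduces to the purely quadratic inequality
$$
N\sum_i\big(\max\{\rho_k\}-\rho_i\big)\,z_i^2\ \ge\ \max\{\rho_k\}\,\Big(\sum_i z_i\Big)^2 .
$$
Here lies the main obstacle: because the governing constraint is the weighted one $\sum_i\rho_i z_i=0$ rather than $\sum_i z_i=0$, the term $(\sum_i z_i)^2$ need not vanish and must be absorbed. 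The key maneuver is to rewrite, using the constraint, $\sum_i z_i=\tfrac{1}{\max\{\rho_k\}}\sum_i(\max\{\rho_k\}-\rho_i)z_i$, and then apply Cauchy--Schwarz with the nonnegative weights $w_i=\max\{\rho_k\}-\rho_i$, noting $\sum_i w_i=N\max\{\rho_k\}-\sum_i\rho_i\le(N-1)\max\{\rho_k\}$. This gives $\max\{\rho_k\}(\sum_i z_i)^2\le (N-1)\sum_i w_i z_i^2\le N\sum_i w_i z_i^2$, which is precisely the required inequality and establishes $z_{\min}=\tfrac{N A_{\min}}{\max\{\rho_k\}}\le z_1$.

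Finally, the statements for $\widehat{Z}=Q^{-\frac12}(F-B)Q^{-\frac12}$ in \eqref{EQ:inequality2} follow verbatim from the same argument after replacing $D-A$, $A_{i,j}$, $\rho_i$, and $P$ by $F-B$, $B_{i,j}$, $n_i$, and $Q$, respectively, and invoking \Cref{lemma:invariantNullSpacesZHat} in place of \Cref{lemma:invariantNullSpacesZ} for the null-space structure.
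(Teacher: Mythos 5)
Your proof of the lower bounds $z_{\min}$ and $\widehat{z}_{\min}$ is correct, and it takes a genuinely different route from the paper's. The paper estimates $\bfy^\top(D-A)\bfy$ for $\bfy$ orthogonal to $\mathbf{1}$ (the null space of $D-A$) and then substitutes $\bfy=P^{-\frac{1}{2}}\bfz$; after the substitution that constraint reads $\bfz\perp P^{-\frac{1}{2}}\mathbf{1}$, which is \emph{not} the orthogonal complement of $\mathcal{N}(Z)=\textnormal{span}(\{P^{\frac{1}{2}}\mathbf{1}\})$, so the conclusion for $z_1$ rests implicitly on the Courant--Fischer max--min characterization (a lower bound for the second-smallest eigenvalue is obtained from the orthogonal complement of \emph{any} single vector). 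You instead work with the pencil $(D-A,P)$ on the correct $P$-orthogonal complement $\{\bfz:\sum_i\rho_iz_i=0\}$, where $(\sum_iz_i)^2$ no longer vanishes, and you absorb it with a weighted Cauchy--Schwarz inequality using $w_i=\max\{\rho_k\}-\rho_i$ and $\sum_iw_i\le(N-1)\max\{\rho_k\}$. That extra step is exactly what is needed to make the argument self-contained on the natural subspace, and it lands on the same constant $z_{\min}=NA_{\min}/\max\{\rho_k\}$. Since only $z_{\min}$ and $\widehat{z}_{\min}$ are used in \Cref{theorem:velocityRelaxationProof,theorem:energyRelaxation}, the substantive content of the theorem is fully covered by your argument.

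The upper bound, however, does not close as you claim. Your estimate $\tfrac{1}{2}\sum_{i,j}(z_i-z_j)^2\le N\|\bfz\|_2^2$ combined with $\bfz^\top P\bfz\ge\min\{\rho_k\}\|\bfz\|_2^2$ yields $\mathcal{Q}(\bfz)\le NA_{\max}/\min\{\rho_k\}$, not the asserted $z_{\max}=(N-1)A_{\max}/\min\{\rho_k\}$; the sentence ``dividing then produces the asserted upper bound'' is off by the factor $N/(N-1)$. This is not something you can repair by sharpening the estimate: for $N=2$ with $\rho_1=\rho_2=1$ the nonzero eigenvalue of $Z=D-A$ is $2A_{1,2}$, which exceeds the claimed $(N-1)A_{\max}/\min\{\rho_k\}=A_{\max}$ whenever $A_{\max}<2A_{1,2}$ (for instance when all entries of $A$ are equal), so the constant $N-1$ in the statement appears to be an error. (The paper's own derivation replaces $\sum_{i\ne j}(y_i-y_j)^2$ by $\sum_{i\ne j}(y_i^2+y_j^2)$, silently dropping the positive cross term $2\|\bfy\|_2^2$ that arises under the constraint $\sum_iy_i=0$.) You should either prove the upper bound with $N$ in place of $N-1$, which your computation already does, or explicitly flag the discrepancy; as noted, $z_{\max}$ and $\widehat{z}_{\max}$ are not used elsewhere in the paper, so nothing downstream is affected.
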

    
    \begin{proof}
The proofs of \eqref{EQ:inequality1} and \eqref{EQ:inequality2} are similar; thus we prove \eqref{EQ:inequality1} and leave the remaining details to the reader.    
The (single) zero eigenvalue of $(D-A)$ corresponds to the eigenvector, $\mathbf{1}$, spanning the null space of $(D-A)$.
To find bounds on the other eigenvalues, let $\bfy\in\mathcal{R}(D-A)=\mathcal{N}(D-A)^\perp$, so that $\bfy^\top\mathbf{1}=\sum_iy_i=0$.
Recall that $\bfy^\top(D-A)\bfy = \frac{1}{2}\sum_{i,j}A_{i,j}(y_i-y_j)^2$.
Using $\sum_iy_i=0$,
    \begin{align}
\bfy^\top(D-A)\bfy
  \leq \frac{A_{\max}}{2}\sum_{i,j}(y_i-y_j)^2
  = \frac{A_{\max}}{2}\sum_{i}\sum_{j\neq i}\left(y_i^2+y_j^2\right) 
  = {A_{\max}}(N-1)\left\|\bfy\right\|_2^2.
    \end{align}
Thus if $\bfy=P^{-\frac{1}{2}}\bfz,$ then
    \begin{multline}
\bfz^\top Z\bfz
  \leq (N-1)A_{\max}\bfz^\top P^{-1}\bfz
  \\
  = (N-1)A_{\max}\sum_i\frac{1}{\rho_i}z_i^2 
  \leq \frac{(N-1)A_{\max}}{\min\{\rho_k\}}\left\|\bfz\right\|_2^2
  \eqqcolon z_{\max}\left\|\bfz\right\|_2^2,         
    \end{multline}
which gives the upper bound in \eqref{EQ:inequality1}.
For the lower bound,
    \begin{align}
\bfy^\top(D-A)\bfy
  \geq \frac{A_{\min}}{2}\sum_{i=1}^N\sum_{j=1}^N(y_i^2+y_j^2) 
  = A_{\min}N\left\|\bfy\right\|_2^2.
    \end{align}
Thus if $\bfy=P^{-\frac{1}{2}}\bfz,$ then
    \begin{align}
\bfz^\top Z\bfz
  &\geq NA_{\min}\bfz^\top P^{-1}\bfz
  \geq \frac{NA_{\min}}{\max\{\rho_k\}}\left\|\bfz\right\|_2^2
  \eqqcolon z_{\min}\left\|\bfz\right\|_2^2.
    \end{align}
    \end{proof}
    
    \section{Null Space Components of \texorpdfstring{$\widetilde{\bfW}$}{WTilde} and \texorpdfstring{$\widetilde{\bfxi}$}{XiTilde}}
    \label{appendix:extraProofs}
    
In this section, we show that $\widetilde{\bfW}_\mathcal{N}$ (the projection of $\widetilde{\bfW}$ onto $\mathcal{N}(Z)$) and $\widetilde{\bfxi}_\mathcal{N}$ (the projection of $\widetilde{\bfxi}$ onto $\mathcal{N}(\widehat{Z})$) are both zero.
These results are used in the proofs of \Cref{theorem:velocityRelaxationProof,theorem:energyRelaxation}, respectively.

    \begin{lemma}
$\widetilde{\bfW}_\mathcal{N}\equiv\mathbf{0}$.
    \end{lemma}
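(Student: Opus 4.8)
The plan is to compute the null-space component of $\widetilde{\bfW}$ explicitly and show it vanishes identically, relying only on conservation of total momentum together with the definition of $\bfu^\infty$. By \Cref{lemma:invariantNullSpacesZ}, $\mathcal{N}(Z)$ is the one-dimensional span of $\bfp \coloneqq P^{\frac12}\mathbf{1}$, and since $Z$ is symmetric the orthogonal projection onto $\mathcal{N}(Z)$ is $\Pi_\mathcal{N} = \bfp\bfp^\top/(\bfp^\top\bfp)$, with $\bfp^\top\bfp = \mathbf{1}^\top P\mathbf{1} = \sum_i\rho_i$. This projector acts on the species (row) index of $\widetilde{\bfW}\in\R^{N\times d}$, so applying it to each of the $d$ columns reduces the claim $\widetilde{\bfW}_\mathcal{N} = \Pi_\mathcal{N}\widetilde{\bfW} = \mathbf{0}$ to showing that the row vector $\bfp^\top\widetilde{\bfW} = \mathbf{1}^\top P^{\frac12}\widetilde{\bfW}\in\R^{1\times d}$ is zero.

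The heart of the argument is the identity $\mathbf{1}^\top P^{\frac12}\widetilde{\bfW} = \mathbf{1}^\top P(\bfU-\bfU^\infty)$, which follows immediately from $\widetilde{\bfW} = P^{\frac12}(\bfU-\bfU^\infty)$. I would then expand the two pieces: $\mathbf{1}^\top P\bfU = \sum_i\rho_i\bfu_i^\top$, while $\bfU^\infty = \mathbf{1}(\bfu^\infty)^\top$ gives $\mathbf{1}^\top P\bfU^\infty = (\sum_i\rho_i)(\bfu^\infty)^\top$. The definition of $\bfu^\infty$ in \eqref{EQ:T_infinity_Definition} is exactly $(\sum_i\rho_i)\bfu^\infty = \sum_i\rho_i\bfu_i$, so the two expansions coincide and their difference is the zero row vector. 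Hence $\bfp^\top\widetilde{\bfW} = \mathbf{0}$, and therefore $\widetilde{\bfW}_\mathcal{N} = \mathbf{0}$ at every $t\in[0,\infty)$.

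As a consistency check I would also note the purely dynamical route: because $\mathcal{N}(Z)$ is time-invariant by \Cref{lemma:invariantNullSpacesZ}, the projector $\Pi_\mathcal{N}$ is constant in time, so $\frac{\diff}{\diff t}\widetilde{\bfW}_\mathcal{N} = \Pi_\mathcal{N}\frac{\diff\widetilde{\bfW}}{\diff t} = -\frac{1}{\varepsilon}\Pi_\mathcal{N}Z\widetilde{\bfW} = \mathbf{0}$, since $Z\widetilde{\bfW}\in\mathcal{R}(Z) = \mathcal{N}(Z)^\perp$. This shows $\widetilde{\bfW}_\mathcal{N}$ is merely constant in time; combined with the explicit computation above evaluated at $t=0$, it yields the same conclusion.

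I do not anticipate a genuine obstacle: the content is simply the algebraic matching between the momentum-weighted average defining $\bfu^\infty$ and the conserved quantity $\sum_i\rho_i\bfu_i$. The only point requiring care is dimensional bookkeeping, since $\widetilde{\bfW}\in\R^{N\times d}$ rather than $\R^N$; one must be careful that $\Pi_\mathcal{N}$ multiplies from the left, so that $\bfp^\top\widetilde{\bfW}$ is a row vector in $\R^d$ and the vanishing identity is checked simultaneously across all $d$ velocity components.
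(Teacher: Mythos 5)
Your proof is correct and follows essentially the same route as the paper: both reduce the claim to showing that $\mathbf{1}^\top P^{\frac12}\widetilde{\bfW}$ (equivalently, its transpose $\widetilde{\bfW}^\top P^{\frac12}\mathbf{1}$) vanishes, and both close the argument by observing that $\bfu^\infty = \bfU^\top P\mathbf{1}/(\mathbf{1}^\top P\mathbf{1})$ makes the two terms cancel identically. The supplementary dynamical observation that $\Pi_\mathcal{N}Z\widetilde{\bfW}=\mathbf{0}$ is a harmless extra consistency check but is not needed.
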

    
    \begin{proof}
Since the null space of $Z$ is spanned by $P^\frac{1}{2}\mathbf{1}$, it is sufficient to show that $\widetilde{\bfW}^\top P^\frac{1}{2}\mathbf{1}=\mathbf{0}$.
Write $\bfu^\infty=\frac{\bfU^\top P\mathbf{1}}{\mathbf{1}^\top P\mathbf{1}}$ and $\left(\bfU^\infty\right)^\top=\bfu^\infty\mathbf{1}^\top$, so that 
    \begin{align}
\widetilde{\bfW}^\top P^\frac{1}{2}\mathbf{1}
  = \bfU^\top P\mathbf{1}-\left(\bfU^\infty\right)^\top P\mathbf{1}
  = \bfU^\top P\mathbf{1}-\frac{\bfU^\top P\mathbf{1}}{\mathbf{1}^\top P\mathbf{1}}\mathbf{1}^\top P\mathbf{1}
  = \mathbf{0}.
    \end{align}
    \end{proof}

    \begin{lemma}
$\widetilde{\bfxi}_\mathcal{N} \equiv \mathbf{0}$.
    \end{lemma}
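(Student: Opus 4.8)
The plan is to mirror the proof of the preceding lemma. By \Cref{lemma:invariantNullSpacesZHat}, the null space of $\widehat{Z}$ is one dimensional and spanned by $Q^{\frac12}\mathbf{1}$, so it suffices to show that $\widetilde{\bfxi}$ is orthogonal to this single vector, i.e. that $\widetilde{\bfxi}^\top Q^{\frac12}\mathbf{1}=0$. First I would substitute the definitions $\bfxi=Q^{-\frac12}\bfE$ and $\bfxi^\infty=Q^{-\frac12}\bfE^\infty$ to cancel the $Q^{\pm\frac12}$ factors, reducing the inner product to
\[
\widetilde{\bfxi}^\top Q^{\frac12}\mathbf{1}=(\bfE-\bfE^\infty)^\top\mathbf{1}=\sum_{i=1}^N\bigl(E_i-E_i^\infty\bigr).
\]
Thus the claim is equivalent to showing that the limiting energy vector $\bfE^\infty$ carries the same total energy as $\bfE$.

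The total energy $\sum_i E_i=\Etot$ is conserved by \eqref{EQ:PDEConservationLaws}, so the remaining task---and the only real content of the proof---is to verify $\sum_i E_i^\infty=\Etot$. To do this I would apply $\mathbf{1}^\top$ to the definition \eqref{eq:U_E_infinity_def} of $\bfE^\infty$, obtaining
\[
\sum_i E_i^\infty=\frac{|\bfu^\infty|^2}{2}\,\mathbf{1}^\top P\mathbf{1}+\frac{dT^\infty}{2}\,\mathbf{1}^\top Q\mathbf{1}=\frac{|\bfu^\infty|^2}{2}\sum_i\rho_i+\frac{dT^\infty}{2}\sum_i n_i.
\]
The second term is exactly the quantity computed in \eqref{EQ:TMixIsConstant} during the proof of \Cref{lemma:mixtureValuesAreConstant}, namely $\frac{d}{2}\bigl(\sum_i n_i\bigr)T^\infty=\sum_i E_i-\frac12\bigl(\sum_i\rho_i\bigr)|\bfu^\infty|^2$. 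Substituting this in, the two $\frac{|\bfu^\infty|^2}{2}\sum_i\rho_i$ contributions cancel and one is left with $\sum_i E_i^\infty=\sum_i E_i=\Etot$.

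Combining the two displays then gives $\widetilde{\bfxi}^\top Q^{\frac12}\mathbf{1}=(\bfE-\bfE^\infty)^\top\mathbf{1}=0$, hence $\widetilde{\bfxi}_\mathcal{N}\equiv\mathbf{0}$. The argument is entirely algebraic and presents no genuine obstacle; the only point requiring care is recognizing that the consistency identity \eqref{EQ:TMixIsConstant} is precisely what forces $\bfE^\infty$ to have the conserved total energy---equivalently, that the definition of $T^\infty$ was engineered so that projecting $\widetilde{\bfxi}$ onto the null direction $Q^{\frac12}\mathbf{1}$ recovers the global energy balance. This is the analog, in the energy variable, of the identity $\bfu^\infty=\bfU^\top P\mathbf{1}/(\mathbf{1}^\top P\mathbf{1})$ used to collapse the corresponding projection in the previous lemma.
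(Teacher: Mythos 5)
Your proof is correct and follows essentially the same route as the paper: reduce to showing $\widetilde{\bfxi}^\top Q^{\frac12}\mathbf{1}=(\bfE-\bfE^\infty)^\top\mathbf{1}=0$ and then invoke the identity \eqref{EQ:TMixIsConstant} together with conservation of total energy. You merely spell out explicitly the computation $\sum_i E_i^\infty=\sum_i E_i$ that the paper leaves as a one-line citation of \eqref{EQ:TMixIsConstant}.
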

    
    \begin{proof}
Since the null space of $\widehat{Z}$ is spanned by $Q^\frac{1}{2}\mathbf{1}$, it is sufficient to show that $\widetilde{\bfxi}^\top Q^\frac{1}{2}\mathbf{1}=0.$
Recall that $\widetilde{\bfxi}=Q^{-\frac{1}{2}}\left(\bfE-\bfE^\infty\right);$
hence
    \begin{align}
\widetilde{\bfxi}^\top Q^\frac{1}{2}\mathbf{1}
  = \left(\bfE-\bfE^\infty\right)^\top\mathbf{1}
  = \sum_{i=1}^NE_i - \sum_{i=1}^NE^\infty_i
  = 0,
    \end{align}
where the fact that $\sum_iE_i=\sum_iE_i^\infty$ is a consequence of \eqref{EQ:TMixIsConstant} in the proof of \Cref{lemma:mixtureValuesAreConstant}.
    \end{proof}

    \bibliographystyle{plain}
    \bibliography{references}

\begin{thebibliography}{10}

\bibitem{baranger2019bgk}
Céline Baranger, Yann Dauvois, Gentien Marois, J~Mathé, Mathiaud Julien, and L~Mieussens.
\newblock A {BGK} model for high temperature rarefied hypersonic flows.
\newblock In {\em Proceedings of EUCASS, 2019}, 11 2019.

\bibitem{beckwith2018modelling}
Kristian Beckwith and Patrick Knapp.
\newblock Modelling plasma transport experiments on {Z}.
\newblock Technical report, Sandia National Lab.(SNL-NM), Albuquerque, NM (United States), 2018.

\bibitem{bhatnagar}
P.~L. Bhatnagar, E.~P. Gross, and M.~Krook.
\newblock {A model for collision processes in gases. I. Small amplitude processes in charged and neutral one-component systems}.
\newblock {\em Phys. Rev.}, 94:511--525, 1954.

\bibitem{book:Bird_DSMC}
G.~Bird.
\newblock {\em Molecular Gas Dynamics and the Direct Simulation of Gas Flows, Volume 1}.
\newblock Clarendon Press, 1994.

\bibitem{bobylev2018general}
Alexander~V Bobylev, Marzia Bisi, Maria Groppi, Giampiero Spiga, and Irina~F Potapenko.
\newblock A general consistent {BGK} model for gas mixtures.
\newblock {\em Kinetic \& Related Models}, 11(6), 2018.

\bibitem{book:MathTheory_NonUniformGases_Chapman_Cowling}
S.~Chapman and T.G. Cowling.
\newblock {\em The Mathematical Theory of Non-uniform Gases: An Account of the Kinetic Theory of Viscosity, Thermal Conduction and Diffusion in Gases}.
\newblock Cambridge Mathematical Library. Cambridge University Press, 1990.

\bibitem{coron}
F.~Coron and B.~Perthame.
\newblock {Numerical passage from kinetic to fluid equations}.
\newblock {\em SIAM J. Numer. Anal.}, 28(1):26--42, 1991.

\bibitem{crestetto2020kinetic}
Ana{\"\i}s Crestetto, Christian Klingenberg, and Marlies Pirner.
\newblock Kinetic/fluid micro-macro numerical scheme for a two component gas mixture.
\newblock {\em Multiscale Modeling \& Simulation}, 18(2):970--998, 2020.

\bibitem{evans2017aerodynamic}
Ben Evans and SP~Walton.
\newblock Aerodynamic optimisation of a hypersonic reentry vehicle based on solution of the {B}oltzmann--{BGK} equation and evolutionary optimisation.
\newblock {\em Applied Mathematical Modelling}, 52:215--240, 2017.

\bibitem{ganapol20191d}
Barry~D Ganapol.
\newblock 1d thermal creep channel flow in the {BGK} approximation by adding and doubling.
\newblock {\em Annals of Nuclear Energy}, 134:441--451, 2019.

\bibitem{haack2021consistent}
Jeff Haack, Cory Hauck, Christian Klingenberg, Marlies Pirner, and Sandra Warnecke.
\newblock A consistent {BGK} model with velocity-dependent collision frequency for gas mixtures.
\newblock {\em Journal of Statistical Physics}, 184:1--17, 2021.

\bibitem{Haack_2023}
Jeffrey Haack, Cory~D Hauck, Christian~F Klingenberg, Marlies Pirner, and Sandra Warnecke.
\newblock Numerical schemes for a multi-species {BGK} model with velocity-dependent collision frequency.
\newblock {\em Journal of Computational Physics}, 473:111729, Jan 2023.

\bibitem{haack2017interfacial}
Jeffrey~R Haack, Cory~D Hauck, and Michael~S Murillo.
\newblock Interfacial mixing in high-energy-density matter with a multiphysics kinetic model.
\newblock {\em Physical Review E}, 96(6):063310, 2017.

\bibitem{Haack2017}
J.R. Haack, C.D. Hauck, and M.S. Murillo.
\newblock A conservative, entropic multispecies {BGK} model.
\newblock {\em J. Stat Phys}, 168:826--856, 2017.

\bibitem{klingenberg_2018}
Christian Klingenberg and Marlies Pirner.
\newblock Existence, uniqueness and positivity of solutions for {BGK} models for mixtures.
\newblock {\em Journal of Differential Equations}, 264, 09 2017.

\bibitem{klingenberg2017consistent}
Christian Klingenberg, Marlies Pirner, and Gabriella Puppo.
\newblock A consistent kinetic model for a two-component mixture with an application to plasma.
\newblock {\em Kinetic \& Related Models}, 10(2), 2017.

\bibitem{kumar2011simulation}
Rakesh Kumar and Deborah~A Levin.
\newblock Simulation of homogeneous condensation of small polyatomic systems in high pressure supersonic nozzle flows using {B}hatnagar--{G}ross--{K}rook model.
\newblock {\em The Journal of chemical physics}, 134(12), 2011.

\bibitem{kumar2010assessment}
Rakesh Kumar, EV~Titov, DA~Levin, NE~Gimelshein, and SF~Gimelshein.
\newblock Assessment of {B}hatnagar-{G}ross-{K}rook approaches for near continuum regime nozzle flows.
\newblock {\em AIAA journal}, 48(7):1531--1541, 2010.

\bibitem{liboff2003kinetic}
Richard~L Liboff.
\newblock {\em Kinetic theory: classical, quantum, and relativistic descriptions}.
\newblock Springer Science \& Business Media, 2003.

\bibitem{liu2018rarefaction}
Wei Liu, Guihua Tang, Wei Su, Lei Wu, and Yonghao Zhang.
\newblock Rarefaction throttling effect: Influence of the bend in micro-channel gaseous flow.
\newblock {\em Physics of Fluids}, 30(8), 2018.

\bibitem{pfeiffer2019coupled}
Marcel Pfeiffer and Paul Nizenkov.
\newblock Coupled ellipsoidal statistical {BGK}-{DSMC} simulations of a nozzle expansion.
\newblock In {\em AIP Conference Proceedings}, volume 2132. AIP Publishing, 2019.

\bibitem{pieraccini2007implicit}
Sandra Pieraccini and Gabriella Puppo.
\newblock Implicit--explicit schemes for {BGK} kinetic equations.
\newblock {\em Journal of Scientific Computing}, 32:1--28, 2007.

\bibitem{puppo2019kinetic}
Gabriella Puppo.
\newblock Kinetic models of {BGK} type and their numerical integration.
\newblock {\em arXiv preprint arXiv:1902.08311}, 2019.

\bibitem{struchtrup2005macroscopic}
Henning Struchtrup.
\newblock {\em Macroscopic transport equations for rarefied gas flows}.
\newblock Springer, 2005.

\end{thebibliography}

    \end{document}